\documentclass[11pt]{article}

\usepackage[T1]{fontenc}
\usepackage{lmodern}
\usepackage[margin=1in]{geometry}
\usepackage{microtype}
\usepackage{amsmath,amssymb,amsthm,amssymb}
\usepackage{tikz,tkz-graph,tikz-cd,tikz-qtree}
\usepackage{mathrsfs}
\usepackage{booktabs}
\usepackage{enumitem}
\usepackage{xcolor}
\usepackage[colorlinks=true,linkcolor=blue!45!black,citecolor=blue!45!black,
  urlcolor=blue!55!black]{hyperref}
\usepackage[nameinlink,noabbrev]{cleveref}

\setlist{itemsep=2pt,topsep=4pt}
\allowdisplaybreaks

\newtheorem{theorem}{Theorem}[section]
\newtheorem{lemma}[theorem]{Lemma}
\newtheorem{corollary}[theorem]{Corollary}

\newtheorem{definition}[theorem]{Definition}
\newtheorem{proposition}[theorem]{Proposition}

\newtheorem{examples}[theorem]{Example}

\begin{document}
\bibliographystyle{abbrv}

\title{Support-Primitive Decomposition of Constacyclic Codes over Finite Fields: Coefficients-Based and Roots-Based Descriptions}
\author{Li Zhu$^{1}$ and Hongfeng Wu$^{2}$\footnote{Corresponding author.}
\setcounter{footnote}{-1}
\footnote{E-Mail addresses:
lizhumath@pku.edu.cn (L. Zhu), whfmath@gmail.com (H. Wu)}
\\[0.5ex]
\small $^{1}$School of Mathematical Sciences, Guizhou Normal University, Guiyang, China
\\
\small $^{2}$College of Science, North China University of Technology, Beijing, China}

\date{}
\maketitle

\thispagestyle{plain}
\setcounter{page}{1}

\begin{abstract}
	Let $\mathcal C=(f)$ be a $\lambda$-constacyclic code over
	$\mathbb F_q$, where $f(X)$ is a monic factor of
	$X^N-\lambda$ with nonzero constant term. We introduce the support
	period $\operatorname{sp}(f)$ of $f(X)$, and define its support-primitive core
	$f_{\mathrm{sp}}(X)$ as the unique support-primitive polynomial
	satisfying $f(X)=f_{\mathrm{sp}}(X^s)$, where $s=\operatorname{sp}(f)$. We show that this polynomial relation induces a
	Hamming-weight-preserving linear isomorphism $\mathcal C
		\cong
		\mathcal C_{\mathrm{sp}}^{\, s}$, where $\mathcal C_{\mathrm{sp}}$ is the support-primitive core of
	$\mathcal C$, and prove that this decomposition is intrinsic to the
	code. We give two equivalent descriptions of the support period: a
	coefficient-based one and a roots-based one. In the repeated-root
	case, the latter is determined by the $p$-adic structure and the
	stabilizer of the defining function, while in the simple-root case
	it is determined by the coarsest multiple equal-difference
	representation of the defining set. We then derive coding-theoretic
	consequences for the Hamming distance, weight enumerator, covering
	radius, and Euclidean duality. In particular, the arithmetic
	Singleton bound of a simple-root constacyclic code is identified
	with the classical Singleton bound of its support-primitive core.
	Finally, we apply the decomposition to cyclic codes with reducible
	generator polynomials and obtain bounds for their arithmetic
	Singleton values.\\

	{\bf KeyWords.}  Constacyclic codes; support-primitive decomposition; support period; finite fields.\\

	{\bf Mathematics Subject Classification (2000)}  11A07, 11T55, 11T71, 11H71, 12Y05.
\end{abstract}

\section{Introduction}

Let $\mathbb F_q$ be a finite field and let
$\lambda\in\mathbb F_q^*$. A $\lambda$-constacyclic code of length
$N$ over $\mathbb F_q$ is a linear code invariant under the
constacyclic shift
\[
	(c_0,c_1,\ldots,c_{N-1})
	\longmapsto
	(\lambda c_{N-1},c_0,\ldots,c_{N-2}).
\]
Under the standard identification
\[
	\mathbb F_q^N
	\cong
	\mathcal R_{N,\lambda}
	=
	\mathbb F_q[X]/(X^N-\lambda),
\]
every $\lambda$-constacyclic code can be identified with an ideal
\[
	\mathcal C=(f(X))
	\subseteq
	\mathcal R_{N,\lambda},
\]
where $f(X)$ is a monic factor of $X^N-\lambda$. Thus, in addition
to its linear structure, a constacyclic code carries a natural
arithmetic structure coming from its generator polynomial.

In recent works, the authors have investigated the Hamming distance
of constacyclic codes through this arithmetic structure. A basic
theme in this study is to understand the symmetry of the roots of a
generator polynomial and then translate it into information about
the polynomial itself and the code. In [5], the multiple
equal-difference (MED) structure of cyclotomic cosets was introduced
to describe an additive symmetry of their elements. In [6], this
structure was related to the coefficient structure of polynomials:
equal-difference root sets correspond to binomial polynomials, and
the resulting root--coefficient correspondence was used to study
minimal binomial multiples of polynomials and, in particular,
Hamming distance $2$ constacyclic codes. In [7], the MED structure
was generalized from cyclotomic cosets to the defining sets of
squarefree polynomials and was used to obtain a family of upper
bounds on the Hamming distance of simple-root constacyclic codes,
including the arithmetic Singleton bound.

The purpose of the present paper is to further develop this
coefficient-side and root-side viewpoint. We introduce invariants
that describe the relevant symmetry of a generator polynomial from
both sides, and prove that these descriptions are equivalent. On the
coefficient side, this leads to the support period
$\operatorname{sp}(f)$ and the support-primitive core
$f_{\mathrm{sp}}(X)$. On the root side, the corresponding information
is expressed in terms of the $p$-adic structure of the defining
function and its stabilizer. In particular, the roots-based
description extends the coarsest MED representation in [7] from the
squarefree setting to the general case in which repeated roots are
allowed.

We then lift this polynomial-level structure to the level of
constacyclic codes. If
\[
	s=\operatorname{sp}(f),
	\qquad
	f(X)=f_{\mathrm{sp}}(X^s),
\]
then $s\mid N$ and
\[
	\mathcal C_{\mathrm{sp}}
	=
	(f_{\mathrm{sp}})
	\subseteq
	\mathcal R_{N/s,\lambda}
\]
is a $\lambda$-constacyclic code. The interleaver map gives a
Hamming-weight-preserving linear isomorphism
\[
	\mathcal C
	\cong
	\mathcal C_{\mathrm{sp}}^{\oplus s},
\]
which we call the support-primitive decomposition of
$\mathcal C$. Moreover, Proposition~3.8 shows that this decomposition
is intrinsic: $\operatorname{sp}(f)$ is characterized as the maximal
divisor for which $\mathcal C$ decomposes according to the
corresponding residue-class subspaces. Thus the decomposition is not
dependent on a particular presentation of the generator polynomial.

The roots-based description is developed in two cases. For
repeated-root constacyclic codes, if $v_f$ denotes the minimal
$p$-adic valuation of the defining function and $\Sigma_f$ its
stabilizer, then
\[
	\operatorname{sp}(f)
	=
	p^{v_f}|\Sigma_f|.
\]
This gives a direct construction of the support-primitive core from
the roots and their multiplicities. For simple-root constacyclic
codes, the defining function reduces to the defining set, and the
coarsest MED representation determines the support period and hence
the support-primitive decomposition. These results provide two
descriptions of the same decomposition, one through coefficients and
the other through roots.

Several coding-theoretic consequences follow from the direct-sum
decomposition. In particular,
\[
	d_H(\mathcal C)=d_H(\mathcal C_{\mathrm{sp}}),
	\qquad
	W_{\mathcal C}(x,y)
	=
	W_{\mathcal C_{\mathrm{sp}}}(x,y)^s,
	\qquad
	\rho(\mathcal C)
	=
	s\rho(\mathcal C_{\mathrm{sp}}),
\]
and the decomposition is compatible with Euclidean duality, the
hull, LCD, self-orthogonality, dual-containing, and self-duality
properties. For simple-root constacyclic codes, the arithmetic
Singleton bound becomes precisely the classical Singleton bound for
the support-primitive core. Finally, we apply the resulting formula
\[
	b_{\mathrm{AS}}(\mathcal C)
	=
	\frac{\deg f}{\operatorname{sp}(f)}+1
\]
to cyclic codes with reducible generator polynomials. We obtain
bounds for products of irreducible factors of a common order, as
well as a factorwise estimate for factors of different orders, and
show by an explicit family that no uniform upper bound exists for
arbitrary reducible generators.

The paper is organized as follows. Section~2 fixes the notation and
conventions. Section~3 develops the coefficient-based description
of the support-primitive decomposition and its intrinsic
characterization. Section~4 gives the roots-based description,
including the repeated-root and simple-root cases. Section~5
establishes coding-theoretic consequences of the decomposition.
Section~6 applies the results to arithmetic Singleton bounds for
cyclic codes with reducible generator polynomials.

\section{Notations and conventions}\label{sec:2}
Throughout this paper, $q$ always denotes a power of a prime number $p$. Let $\mathbb{F}_{q}$ be a finite field containing $q$ elements, and let $\mathbb{F}_{q}^{\ast}$ be the multiplicative group of the nonzero elements in $\mathbb{F}_{q}$, which is cyclic with order $q-1$. For any positive integer $n$ that is not divisible by $p$, there are $n$ distinct roots of $X^{n}-1$ lying in some finite extension of $\mathbb{F}_{q}$. Any root $\zeta_{n}$ of $X^{n}-1$ which fails to be a root of $X^{m}-1$ for any $m < n$ is called a primitive $n$-th root of unity. We fix, once and for all, a compatible family
$$\{\zeta_{n} \ | \ \mathrm{gcd}(n,q)=1\}$$
of primitive roots of unity, where compatibility means that
\[
	\zeta_n^{\,\frac{n}{m}}=\zeta_m
\]
whenever $m\mid n$ and $\gcd(mn,q)=1$.

Let $N = p^{k}n$ be a positive integer, where $k = v_{p}(N) \geq 0$ and $n$ is not divisible by $p$, and $\lambda$ be a nonzero element in $\mathbb{F}_{q}$ with order $\mathrm{\lambda} = r$. Clearly $r$ divides $q-1$ and is coprime to $q$. Then the roots of $X^{N}-\lambda$ are all $nr$-th roots of unity, each of which has multiplicity $p^{k}$. Let $f(X)$ be a nonconstant monic factor of $X^{N}-\lambda$. Fixing a primitive $nr$-th root $\zeta_{nr}$ of unity, then $f(X)$ can be written as
$$f(X) = \prod_{\gamma \in T_{f}}(X-\zeta_{nr}^{\gamma})^{\Omega(\gamma)},$$
where $T_{f} \subseteq \mathbb{Z}/nr\mathbb{Z}$ and $1 \leq \Omega(\gamma) \leq p^{k}$ for all $\gamma \in T_{f}$. Equivalently, $f(X)$ can be completely determined by the function
$$\Omega_{f}: T_{f} \rightarrow \{1,\cdots,p^{k}\}; \ \gamma \mapsto \Omega_{f}(\gamma) = \Omega(\gamma).$$
This function $\Omega_{f}$ is referred to as the defining function of $f(X)$. In particular, if $v_{p}(N)=0$, the target of $\Omega_{f}$ is $\{1\}$, and thus the function $\Omega_{f}$ can be simply identified with its source $T_{f}$, which is well-known for the defining set of $f(X)$.

Let $N$ be a positive integer. A linear code $\mathcal{C}$ of length $N$ over $\mathbb{F}_{q}$ is a linear subspace of $\mathbb{F}_{q}^{N}$. The dimension of $\mathcal{C}$ is defined to be its dimension as a $\mathbb{F}_{q}$-space. For a codeword $c = (c_{0},c_{1},\cdots,c_{N-1}) \in \mathcal{C}$, the number of the nonzero terms among $c_{0},c_{1},\cdots,c_{N-1}$ is called the Hamming weight of $c$ and is denoted by $\mathrm{wt}_{\mathrm{H}}(c)$. The minimal Hamming weight of any nonzero codeword in $\mathcal{C}$ is referred to as the Hamming distance of $\mathcal{C}$ and is denoted by $d_{\mathrm{H}}(\mathcal{C})$. A linear code with length $N$, dimension $k$ and Hamming distance $d$ is said to be a linear $[N,k,d]$-code.

For any $\lambda \in \mathbb{F}_{q}^{\ast}$, the $\lambda$-constacyclic shift on $\mathbb{F}_{q}^{N}$ is given by
$$\tau_{\lambda}(c_{0},c_{1},\cdots,c_{N-1}) = (\lambda c_{N-1},c_{0},\cdots,c_{N-2}).$$
A linear code $\mathcal{C}$ of length $N$ is said to be a $\lambda$-constacycic code if $\tau_{\lambda}(\mathcal{C}) = \mathcal{C}$. In particular, $\mathcal{C}$ is called cyclic if $\lambda = 1$, and is called negacyclic if $\lambda = -1$.

Write
$$\mathcal{R}_{N,\lambda} = \mathbb{F}_{q}[X]/(X^{N}-\lambda).$$
As $\mathbb{F}_{q}$-spaces $\mathbb{F}_{q}^{N}$ and $\mathcal{R}_{N,\lambda}$ are isomorphic, via mapping each word $c = (c_{0},c_{1},\cdots,c_{N-1})$ to the polynomial
$$c(X) = c_{0} + c_{1}X + \cdots + c_{N-1}X^{N-1} \in \mathcal{R}_{N,\lambda}.$$
Under this isomorphism, a code $\mathcal{C} \subseteq \mathbb{F}_{q}^{N}$ is $\lambda$-constacyclic if and only if its image is an ideal in $\mathcal{R}_{N,\lambda}$. Therefore we will identify a $\lambda$-constacyclic code $\mathcal{C}$ of length $N$ with an ideal
$$(f(X)) \subseteq \mathcal{R}_{N,\lambda},$$
where $f(X)$ is a moinc factor of $X^{N}-\lambda$, called the generator polynomial of $\mathcal{C}$. Also, codewords in $\mathcal{C}$ are identified with multiples of $f(X)$ with degree $\leq N-1$. Following this fact, we will refer to the number of nonzero coefficients of a polynomial $f(X) \in \mathbb{F}_{q}[X]$ as the Hamming weight of $f(X)$, and denote it by $\mathrm{wt}_{\mathrm{H}}(f)$.

According to whether $X^{N}-\lambda$ has repeated roots or not, a $\lambda$-constacyclic code of length $N$ is said to be repeated-root if $p \mid N$, while is said to be simple-root if $p \nmid N$.

\section{Support-primitive decomposition of constacyclic codes: the coefficients-based description}
\subsection{Support-primitive polynomials}
\begin{definition}
	Let
	$$f(X) = a_{0}+a_{1}X+\cdots+a_{\tau}X^{\tau} \in \mathbb{F}_{q}[X],$$
	with $a_{0}a_{\tau} \neq 0$. The support of $f(X)$ is the set
	$$\mathrm{Supp}(f) = \{0\leq j\leq \tau \ | \ a_{j}\neq 0\},$$
	and its support period is defined to be
	$$\mathrm{sp}(f) = \mathrm{gcd}\{j: \ j\in\mathrm{Supp}(f)\}.$$
	The polynomial $f(X)$ is called support-primitive if $\mathrm{sp}(f)=1$.
\end{definition}

For convenience in what follows, we will assume that any nonzero constant polynomial has support period $1$.

A central result for establishing the support-primitive decomposition of constacyclic codes is that every polynomial with a nonconstant term can always be obtained by a unique support-primitive polynomial by a certain variable substitution, which is phrased precisely by the next theorem.

\begin{theorem}\label{thm:1}
	Let $f(X) \in \mathbb{F}_{q}[X]$ be a polynomial such that $f(0)\neq 0$, with support period $s = \mathrm{sp}(f)$. Then there exists a unique support-primitive polynomial $f_{\mathrm{sp}}(X) \in \mathbb{F}_{q}[X]$ such that
	$$f(X) = f_{\mathrm{sp}}(X^s).$$
	The polynomial $f_{\mathrm{sp}}(X)$ is called the support-primitive core of $f(X)$.
\end{theorem}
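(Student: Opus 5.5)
Existence is a direct construction from the coefficients, and uniqueness follows by comparing coefficients after the substitution $X\mapsto X^s$. First I treat the degenerate case. If $f$ is a nonzero constant, then $s=1$ by the stated convention, and $f_{\mathrm{sp}}=f$ is forced. So assume $f(X)=\sum_{j\in\mathrm{Supp}(f)}a_jX^j$ has degree $\tau\ge 1$ with $a_0a_\tau\neq 0$. Since $0\in\mathrm{Supp}(f)$ and $\gcd(0,j)=j$, the support period $s=\gcd\{j:j\in\mathrm{Supp}(f)\}$ equals the gcd of the positive elements of the support. It is therefore a well-defined positive integer dividing every $j\in\mathrm{Supp}(f)$, including $\tau$.

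For existence, I define
\[
f_{\mathrm{sp}}(X)=\sum_{j\in\mathrm{Supp}(f)}a_jX^{j/s}.
\]
This is a genuine polynomial because $s\mid j$, and $f_{\mathrm{sp}}(X^s)=f(X)$ holds term by term. Its support is $\mathrm{Supp}(f_{\mathrm{sp}})=\{j/s: j\in\mathrm{Supp}(f)\}$. Its constant term is $a_0\neq0$ and its leading coefficient is $a_\tau\neq0$, so the definition of support period applies to it. Using $\gcd\{j/s\}=\gcd\{j\}/s$, we get $\mathrm{sp}(f_{\mathrm{sp}})=s/s=1$, so $f_{\mathrm{sp}}$ is support-primitive.

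For uniqueness, suppose $g(X)=\sum_i b_iX^i$ is support-primitive with $g(X^s)=f(X)$. The substitution $X\mapsto X^s$ sends distinct monomials $X^i$ to distinct monomials $X^{is}$, so $\sum_i b_iX^{is}=\sum_j a_jX^j$. Comparing coefficients gives $b_i=a_{is}$ for all $i$, hence $g=f_{\mathrm{sp}}$. In fact uniqueness does not use support-primitivity of $g$ at all: the map $h\mapsto h(X^s)$ is an injective ring homomorphism $\mathbb F_q[X]\to\mathbb F_q[X]$, so at most one polynomial $h$ can satisfy $h(X^s)=f$.

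There is no real obstacle in this argument. The only care needed is with the conventions: the element $0$ in the support does not change the gcd, the constant case uses the convention $\mathrm{sp}=1$, and support-primitivity of $f_{\mathrm{sp}}$ comes from the identity $\gcd\{j/s\}=\gcd\{j\}/s$. A stronger uniqueness statement, that $s$ is the largest $t$ for which $f$ is a polynomial in $X^t$, also holds: if $f=h(X^t)$ then $t$ divides every element of $\mathrm{Supp}(f)$, so $t\mid s$.
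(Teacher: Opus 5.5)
Your proof is correct and follows the paper's argument essentially verbatim: you build $f_{\mathrm{sp}}$ by dividing every exponent in $\mathrm{Supp}(f)$ by $s$, obtain support-primitivity from $\gcd\{j/s\}=\gcd\{j\}/s=1$, and get uniqueness from the injectivity of the substitution $X\mapsto X^s$. You also handle the constant case through the convention $\mathrm{sp}=1$ and note that the index $0$ in the support does not change the gcd, which makes your write-up slightly more careful than the paper's, since the paper passes over both points.
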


\begin{proof}
	Since $s = \mathrm{gcd}\{j: \ j\in\mathrm{Supp}(f)\}$, every nonzero coefficient of $f(X)$ has index divisible by $s$, which indicates that $f(X)$ is in the form
	$$f(X) = \sum_{sj \in \mathrm{Supp}(f)}a_{sj}X^{sj}.$$
	Take
	$$f_{\mathrm{sp}}(X) = \sum_{sj \in \mathrm{Supp}(f)}a_{sj}X^{j},$$
	then it is clear that $f(X) = f_{\mathrm{sp}}(X^s)$. Note that the support of $f_{\mathrm{sp}}(X)$ is
	$$\mathrm{Supp}(f_{\mathrm{sp}}) = \{j: \ sj \in \mathrm{Supp}(f)\},$$
	therefore
	$$\mathrm{sp}(f_{\mathrm{sp}}) = \mathrm{gcd}\{j: \ sj \in \mathrm{Supp}(f)\} = \dfrac{\mathrm{sp}(f)}{s} = 1.$$
	That is, $f_{\mathrm{sp}}(X)$ is support-primitive. Finally the uniqueness of $f_{\mathrm{sp}}(X)$ follows from that the variable substitution $X \mapsto X^s$ gives rise to an injective endomorphism of the ring $\mathbb{F}_{q}[X]$.
\end{proof}

The basic parameters of $f_{\mathrm{sp}}(X)$ is immediately from the proof of Theorem \ref{thm:1}.

\begin{corollary}\label{coro:1}
	With the notations as in Theorem \ref{thm:1}, we have
	$$\mathrm{deg}(f_{\mathrm{sp}}) = \dfrac{\mathrm{deg}(f)}{s}, \qquad \mathrm{wt}_{\mathrm{H}}(f_{\mathrm{sp}}) = \mathrm{wt}_{\mathrm{H}}(f).$$
\end{corollary}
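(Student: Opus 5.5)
The plan is to read both equalities off the explicit construction of $f_{\mathrm{sp}}(X)$ in the proof of Theorem \ref{thm:1}. That proof gives
$$f_{\mathrm{sp}}(X)=\sum_{sj\in\mathrm{Supp}(f)}a_{sj}X^{j},$$
together with the support identity
$$\mathrm{Supp}(f_{\mathrm{sp}})=\{j:\ sj\in\mathrm{Supp}(f)\}.$$
Both parts of the corollary will follow from this identity.

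\textbf{Degree.} Write $\tau=\deg f$, so that $a_\tau\neq 0$ and $\tau\in\mathrm{Supp}(f)$. By definition $s$ divides every element of $\mathrm{Supp}(f)$, so $s\mid\tau$. Then $\tau/s\in\mathrm{Supp}(f_{\mathrm{sp}})$, and the corresponding coefficient of $f_{\mathrm{sp}}$ is $a_\tau\neq 0$. Any $j$ larger than $\tau/s$ gives $sj>\tau$, so $sj\notin\mathrm{Supp}(f)$ and $j\notin\mathrm{Supp}(f_{\mathrm{sp}})$. Hence $\deg(f_{\mathrm{sp}})=\tau/s$.

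Alternatively, $\deg\big(g(X^s)\big)=s\deg g$ holds for any nonzero $g$, and applying this to $f=f_{\mathrm{sp}}(X^s)$ gives the same result. For a nonzero constant $f$ we have $s=1$ by convention, and the statement is trivial.

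\textbf{Weight.} The map $j\mapsto sj$ is a bijection from $\mathrm{Supp}(f_{\mathrm{sp}})$ onto $\mathrm{Supp}(f)$. It is injective because $s\geq 1$. It is surjective because every element of $\mathrm{Supp}(f)$ is divisible by $s$, which is exactly how $s$ was defined. The coefficients match under this bijection: the coefficient of $X^j$ in $f_{\mathrm{sp}}$ is $a_{sj}$, which is nonzero precisely when $sj\in\mathrm{Supp}(f)$. Since $\mathrm{wt}_{\mathrm{H}}$ counts nonzero coefficients, this gives $\mathrm{wt}_{\mathrm{H}}(f_{\mathrm{sp}})=|\mathrm{Supp}(f_{\mathrm{sp}})|=|\mathrm{Supp}(f)|=\mathrm{wt}_{\mathrm{H}}(f)$.

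There is no real obstacle here. The only point needing care is that $s\mid\deg f$, which holds because $\deg f$ lies in the support.
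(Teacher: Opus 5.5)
Your proposal is correct and takes the same approach as the paper, which simply says both equalities are immediate from the explicit formula $f_{\mathrm{sp}}(X)=\sum_{sj\in\mathrm{Supp}(f)}a_{sj}X^{j}$ and the identity $\mathrm{Supp}(f_{\mathrm{sp}})=\{j:\ sj\in\mathrm{Supp}(f)\}$ from the proof of Theorem~\ref{thm:1}. You supply the details the paper leaves implicit: that $s\mid\deg f$, and that $j\mapsto sj$ is a bijection between the two supports.
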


\subsection{Support-primitive decomposition of constacyclic codes}
In this subsection we assume that $f(X)$ is a nonconstant polynomial over $\mathbb{F}_{q}$, with a nonzero constant term. Denote by $\mathrm{Supp}(f)$ the support of $f(X)$, and by $s = \mathrm{sp}(f)$ the support period of $f(X)$. Theorem \ref{thm:1} determines uniquely a support-primitive polynomial $f_{\mathrm{sp}}(X) \in \mathbb{F}_{q}[X]$, called the support-primitive core of $f(X)$, satisfying
\begin{equation}\label{eq:1}
	f(X) = f_{\mathrm{sp}}(X^s).
\end{equation}
The main purpose of this subsection is to show the identity \eqref{eq:1} can be lifted to a relation at the level of ideals: a constacyclic code $\mathcal{C}$ generated by $f(X)$, viewed as a $\mathbb{F}_{q}$-linear space, can be decomposed into $s$ copies of a certain constacyclic code $\mathcal{C}_{\mathrm{sp}}$ generated by $f_{\mathrm{sp}}(X)$. Further, this decomposition preserves the Hamming distance.

We begin with the following lemma.

\begin{lemma}\label{lem:1}
	Let $N$ be a positive integer and $\lambda$ be a nonzero element in $\mathbb{F}_{q}$. Let $f(X)$ be a nonconstant factor of $X^{N}-\lambda$, with support period $s$. Then $s \mid N$. Furthermore, writing $M=\frac{N}{s}$, we have
	$$f_{\mathrm{sp}}(X) \mid X^{M}-\lambda.$$
\end{lemma}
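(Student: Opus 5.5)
The plan is to exploit the fact that multiplication by $f(X)=f_{\mathrm{sp}}(X^s)$ respects the grading of $\mathbb{F}_q[X]$ by exponents modulo $s$. Concretely, I will use the $\mathbb{F}_q[X^s]$-module decomposition
$$\mathbb{F}_q[X]=\bigoplus_{i=0}^{s-1}X^{i}\,\mathbb{F}_q[X^s].$$
Every polynomial $h(X)$ can be written uniquely as $h(X)=\sum_{i=0}^{s-1}X^{i}h_i(X^s)$ with $h_i\in\mathbb{F}_q[X]$. Since $f\in\mathbb{F}_q[X^s]$, multiplying by $f$ maps each summand $X^{i}\mathbb{F}_q[X^s]$ into itself. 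The case $s=1$ is trivial, so assume $s\ge 2$.

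\textbf{Step 1: $s\mid N$.} Since $f$ divides $X^N-\lambda$, write $X^N-\lambda=f(X)g(X)$ and decompose $g(X)=\sum_{i=0}^{s-1}X^{i}g_i(X^s)$. Then
$$f(X)g(X)=\sum_{i}X^{i}f_{\mathrm{sp}}(X^s)g_i(X^s),$$
and this is exactly the decomposition of the product into components. Write $N=sM+t$ with $0\le t<s$. The left side $X^N-\lambda$ has component $-\lambda$ in degree class $0$ and component $X^{sM}$ in degree class $t$. Suppose $t\neq 0$. Comparing the class-$0$ components gives $f_{\mathrm{sp}}(X^s)g_0(X^s)=-\lambda$. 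Since $\lambda\neq0$, this forces $g_0\neq 0$ and then $\deg f=0$, contradicting the assumption that $f$ is nonconstant. Hence $t=0$, i.e.\ $s\mid N$.

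\textbf{Step 2: $f_{\mathrm{sp}}\mid X^M-\lambda$.} With $t=0$, the whole of $X^N-\lambda=X^{sM}-\lambda$ lies in class $0$. Comparing the class-$0$ components gives
$$X^{sM}-\lambda=f_{\mathrm{sp}}(X^s)\,g_0(X^s).$$
Now I invoke the fact, already used in the proof of Theorem~\ref{thm:1}, that the substitution $Y\mapsto X^s$ is an injective ring endomorphism of $\mathbb{F}_q[X]$. Applying it to the identity $Y^M-\lambda=f_{\mathrm{sp}}(Y)g_0(Y)$, whose image is the identity just obtained, shows that this identity holds in $\mathbb{F}_q[Y]$. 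Hence $f_{\mathrm{sp}}(X)\mid X^M-\lambda$.

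The only delicate point is justifying that the class-$0$ components on the two sides must match. This is where uniqueness of the decomposition, and the fact that multiplication by $f$ preserves each class, is essential. One should also note that the argument never requires $p\nmid s$, so the repeated-root case is covered as well.
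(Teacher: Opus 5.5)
Your proof is correct and is essentially the paper's argument. You decompose $g$ along $\mathbb{F}_q[X]=\bigoplus_{i=0}^{s-1}X^{i}\mathbb{F}_q[X^s]$, compare components of $X^N-\lambda=f_{\mathrm{sp}}(X^s)g(X)$ to get $s\mid N$ and $X^{sM}-\lambda=f_{\mathrm{sp}}(X^s)g_0(X^s)$, and then undo the substitution $X\mapsto X^s$; your degree argument ruling out $t\neq 0$ (using $\lambda\neq 0$ and $\deg f>0$) spells out a step the paper only asserts.
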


\begin{proof}
	Viewing $\mathbb{F}_{q}[X]$ as a $\mathbb{F}_{q}[X^s]$-module, one has the direct sum decomposition
	$$\mathbb{F}_{q}[X] = \bigoplus_{j=0}^{s-1}X^{j}\mathbb{F}_{q}[X^s].$$
	That is, $\mathbb{F}_{q}[X]$ is a free $\mathbb{F}_{q}[X^s]$-module of rank $s$.

	Let $g(X) \in \mathbb{F}_{q}[X]$ be such that
	\begin{equation}\label{eq:2}
		X^{N}-\lambda = f(X)g(X) = f_{\mathrm{sp}}(X^s)g(X).
	\end{equation}
	Write $N = sM+r$, where $0 \leq r < s$, and
	$$g(X) = \sum_{j=0}^{s-1}X^{j}g_{j}(X^s).$$
	Then \eqref{eq:2} gives
	$$\lambda = X^{N}-f_{\mathrm{sp}}(X^s)g(X) = X^{r}(X^{sM}-f_{\mathrm{sp}}(X^s)g_{r}(X^s)) + \sum_{\substack{0\leq j\leq s-1\\ j\neq r}}X^{j}f_{\mathrm{sp}}(X^s)g_{j}(X^s).$$
	As $\lambda \in \mathbb{F}_{q}[X^d]$, we have $r=0$, $g_{j}(X)=0$ for any $j \neq 0$, and
	$$X^{sM}-\lambda = f_{\mathrm{sp}}(X^s)g_{0}(X^s).$$
	It follows that
	$$X^{M}-\lambda = f_{\mathrm{sp}}(X)g_{0}(X),$$
	and hence $f_{\mathrm{sp}}(X) \mid X^{M}-\lambda$.
\end{proof}

Keep the notations defined as in Lemma \ref{lem:1}. Let
$$\mathcal{R}_{N,\lambda} = \mathbb{F}_{q}[X]/(X^{N}-\lambda),$$
and let $\mathcal{C} = (f(X)) \subseteq \mathcal{R}_{N,\lambda}$ be the $\lambda$-constacyclic code of length $N$ which is generated by $f(X)$. As $f_{\mathrm{sp}}(X)$ divides $X^{M}-\lambda$, we set
$$\mathcal{R}_{M,\lambda} = \mathbb{F}_{q}[X]/(X^{M}-\lambda),$$
and denote by $\mathcal{C}_{\mathrm{sp}} = (f_{\mathrm{sp}}(X)) \subseteq \mathcal{R}_{M,\lambda}$ the $\lambda$-constacyclic code of length $M$ generated by $f_{\mathrm{sp}}(X)$. The support-primitive decomposition of $\mathcal{C}$ is given as follows.

\begin{theorem}\label{thm:support-primitive-decomposition}
	Define
	\begin{align*}
		\Phi_{s}: \mathcal{R}_{M,\lambda}^{s}  & \rightarrow \mathcal{R}_{N,\lambda}           \\
		(\alpha_{0}(X),\cdots,\alpha_{s-1}(X)) & \mapsto \sum_{j=0}^{s-1}X^{j}\alpha_{j}(X^s).
	\end{align*}
	Then $\Phi_{s}$ is an isomorphism of $\mathbb{F}_{q}$-linear spaces which preserves the Hamming weight. Moreover, restricting on $\mathcal{C}_{\mathrm{sp}}^{s}$, $\Phi_{s}$ gives an Hamming-weight-preserving isomorphism
	$$\Phi_{s}: \mathcal{C}_{\mathrm{sp}}^{s} \rightarrow \mathcal{C}.$$
\end{theorem}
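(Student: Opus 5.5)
We work throughout with canonical representatives: elements of $\mathcal{R}_{M,\lambda}$ are polynomials of degree $<M$, and elements of $\mathcal{R}_{N,\lambda}$ are polynomials of degree $<N=sM$ (Lemma \ref{lem:1} gives $s\mid N$).

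\textbf{Step 1: $\Phi_s$ is a weight-preserving bijection.} Take $\alpha_j(X)=\sum_{i=0}^{M-1}a_{j,i}X^i$. Then
$$\Phi_s(\alpha_0,\dots,\alpha_{s-1})=\sum_{j=0}^{s-1}\sum_{i=0}^{M-1}a_{j,i}X^{si+j}.$$
The exponents $si+j$ with $0\le i<M$ and $0\le j<s$ run bijectively over $\{0,\dots,N-1\}$, by division with remainder. So this expression is already the canonical representative, with no reduction modulo $X^N-\lambda$ needed. On coefficient vectors, $\Phi_s$ is therefore just the interleaving permutation $a_{j,i}\mapsto c_{si+j}$. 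Consequently $\Phi_s$ is $\mathbb{F}_q$-linear and bijective. It also carries the multiset of coefficients of the tuple exactly onto that of the image, so the Hamming weight of the tuple (the sum of the weights of the components) equals the weight of the image.

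\textbf{Step 2: $\Phi_s$ is well defined as a map of quotient rings.} Although Step 1 already settles this via representatives, it is worth checking directly. Substituting $X\mapsto X^s$ sends $X^M-\lambda$ to $X^N-\lambda$. Hence $\alpha(X)\mapsto\alpha(X^s)$ induces a ring homomorphism $\iota:\mathcal{R}_{M,\lambda}\to\mathcal{R}_{N,\lambda}$, and
$$\Phi_s(\alpha_0,\dots,\alpha_{s-1})=\sum_j X^j\iota(\alpha_j).$$

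\textbf{Step 3: $\Phi_s(\mathcal{C}_{\mathrm{sp}}^s)\subseteq\mathcal{C}$.} If $\alpha_j=f_{\mathrm{sp}}\beta_j$ in $\mathcal{R}_{M,\lambda}$, then, since $\iota$ is a ring homomorphism,
$$\iota(\alpha_j)=f_{\mathrm{sp}}(X^s)\beta_j(X^s)=f(X)\beta_j(X^s).$$
So $\Phi_s(\alpha)=f(X)\sum_j X^j\beta_j(X^s)\in\mathcal{C}$.

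\textbf{Step 4: surjectivity onto $\mathcal{C}$.} Two routes are available.

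\emph{Dimension count.} We have $\dim\mathcal{C}=N-\deg f$, and
$$\dim\mathcal{C}_{\mathrm{sp}}^s=s\,(M-\deg f_{\mathrm{sp}})=N-\deg f$$
by Corollary \ref{coro:1}. An injective linear map between spaces of equal dimension is onto, so this suffices.

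\emph{Direct argument.} Given $c\in\mathcal{C}$, write $c=f h$ with $h$ a polynomial and decompose $h=\sum_j X^jh_j(X^s)$ using the free $\mathbb{F}_q[X^s]$-module structure from the proof of Lemma \ref{lem:1}. Then
$$c=\Phi_s\bigl(f_{\mathrm{sp}}h_0,\dots,f_{\mathrm{sp}}h_{s-1}\bigr)$$
in $\mathcal{R}_{N,\lambda}$, and each $f_{\mathrm{sp}}h_j$ reduced modulo $X^M-\lambda$ lies in $\mathcal{C}_{\mathrm{sp}}$. The one point to check is that reducing each $f_{\mathrm{sp}}h_j$ modulo $X^M-\lambda$ commutes with applying $\Phi_s$ and reducing modulo $X^N-\lambda$; this is exactly the well-definedness established in Step 2.

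Either route completes the proof. The restricted map is then a weight-preserving isomorphism $\mathcal{C}_{\mathrm{sp}}^s\to\mathcal{C}$. The only step needing real care is the bookkeeping that canonical representatives map to canonical representatives; once that is in place, weight preservation is immediate.
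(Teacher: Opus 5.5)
Your proof is correct and essentially follows the paper. Bijectivity and weight preservation come from the fact that the exponents $si+j$ ($0\le i<M$, $0\le j<s$) enumerate $\{0,\dots,N-1\}$ exactly once. The identification of the image with $\mathcal{C}$ rests on the decomposition $\mathbb{F}_q[X]=\bigoplus_{j=0}^{s-1}X^j\mathbb{F}_q[X^s]$ applied to the cofactor, exactly as in your direct argument in Step 4.

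The only variations are minor:
\begin{itemize}
\item For surjectivity you also give a dimension count, using $f_{\mathrm{sp}}\mid X^M-\lambda$ from Lemma \ref{lem:1} and $\deg f_{\mathrm{sp}}=\deg f/s$ from Corollary \ref{coro:1}. The paper instead uses a componentwise ``if and only if'' divisibility argument.
\item You read $\Phi_s$ as a coordinate permutation, which gives bijectivity at once. The paper argues injectivity and then compares dimensions.
\end{itemize}
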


\begin{proof}
	It is routine to check that $\Phi_{s}: \mathcal{R}_{M,\lambda}^{s} \rightarrow \mathcal{R}_{N,\lambda}$ is a $\mathbb{F}_{q}$-linear map. If $(\alpha_{0}(X),\cdots,\alpha_{s-1}(X)) \in \ker\Phi_{s}$, then
	\begin{equation}\label{eq:3}
		X^{n}-1 \mid \sum_{j=0}^{s-1}X^{j}\alpha_{j}(X^s).
	\end{equation}
	Without losing generality, one may assume that $\mathrm{deg}(\alpha_{j})\leq M-1$ for any $0 \leq j \leq s-1$, which yields
	$$\mathrm{deg}(\sum_{j=0}^{s-1}X^{j}\alpha_{j}(X^s)) \leq (M-1)d+d-1 = N-1.$$
	Therefore \eqref{eq:3} implies $\alpha_{j}(X) = 0$ for all $0 \leq j \leq s-1$, that is, $(\alpha_{0}(X),\cdots,\alpha_{s-1}(X)) = 0 \in \mathcal{R}_{M,\lambda}^{s}$. Consequently $\Phi_{s}$ is injective. Further, note that
	$$\mathrm{dim}_{\mathbb{F}_{q}}\mathcal{R}_{M,\lambda}^{s} = s\cdot\mathrm{dim}_{\mathbb{F}_{q}}\mathcal{R}_{M,\lambda} = N = \mathrm{dim}_{\mathbb{F}_{q}}\mathcal{R}_{N,\lambda},$$
	hence $\Phi_{s}$ is an isomorphism of $\mathbb{F}_{q}$-linear spaces.

	For $(\alpha_{0}(X),\cdots,\alpha_{s-1}(X)) \in \mathcal{R}_{M,\lambda}^{s}$, write
	$$\alpha_{j}(X) = \sum_{i=0}^{M-1}a_{ij}X^{i}$$
	for every $0 \leq j \leq s-1$. By definition
	$$\Phi_{s}(\alpha_{0}(X),\cdots,\alpha_{s-1}(X)) = \sum_{j=0}^{s-1}\sum_{i=0}^{M-1}a_{ij}X^{si+j}.$$
	Then we have
	$$\mathrm{wt}_{\mathrm{H}}(\alpha_{0}(X),\cdots,\alpha_{s-1}(X)) = \sum_{j=0}^{s-1}\#\{0\leq i\leq M-1: \ a_{ij}\neq 0\} = \mathrm{wt}_{\mathrm{H}}\Phi_{s}(\alpha_{0}(X),\cdots,\alpha_{s-1}(X)),$$
	which shows that $\Phi_{s}$ preserves the Hamming weight.

	Finally, for any $(\alpha_{0}(X),\cdots,\alpha_{s-1}(X)) \in \mathcal{R}_{M,\lambda}^{s}$,
	$$f(X) = f_{\mathrm{sp}}(X^s) \mid \sum_{j=0}^{s-1}X^{j}\alpha_{j}(X^s)$$
	if and only if there is a polynomial $g(X) = \sum\limits_{j=0}^{s-1}X^{j}g_{j}(X^s)$ such that
	$$f_{\mathrm{sp}}(X^s)g(X) = \sum_{j=0}^{s-1}X^{j}f_{\mathrm{sp}}(X^s)g_{j}(X^s) = \sum_{j=0}^{s-1}X^{j}\alpha_{j}(X^s),$$
	which amounts to that
	$$f_{\mathrm{sp}}(X^s)g_{j}(X^s) = \alpha_{j}(X^s)$$
	for every $0 \leq j\leq s-1$. This is equivalent to that $f_{\mathrm{sp}}(X) \mid \alpha_{j}(X)$ for every $0 \leq j\leq s-1$. Therefore the image of $\mathcal{C}_{\mathrm{sp}}^{s}$ under $\Phi_{s}$ is exactly $\mathcal{C}$. The last assertion follows immediately.
\end{proof}

\begin{definition}
	The constacyclic code $\mathcal{C}_{\mathrm{sp}}$ is called the support-primitive core of $\mathcal{C}$, and the direct-sum decomposition
	$$\mathcal{C} \simeq \mathcal{C}_{\mathrm{sp}}^{s}$$
	is called the support-primitive decomposition of $\mathcal{C}$.
\end{definition}

A consequence of Theorem \ref{thm:support-primitive-decomposition} is that the Hamming distance of $\mathcal{C}$ is the same as that of its support-primitive core $\mathcal{C}_{\mathrm{sp}}$, whose length is only $\frac{1}{s}$ of the length of $\mathcal{C}$. This fact provides an effective way to simplify the computation of the Hamming distance of $\mathcal{C}$.

\begin{corollary}
	If $\mathcal{C}$ is a $[N,k,d]$ code, then $\mathcal{C}_{\mathrm{sp}}$ has parameters $[\frac{N}{s},\frac{k}{s},d]$, where $s$ is the support period of the generator polynomial $f(X)$ of $\mathcal{C}$.
\end{corollary}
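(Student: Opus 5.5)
The plan is to read off all three parameters of $\mathcal{C}_{\mathrm{sp}}$ from the Hamming-weight-preserving linear isomorphism $\Phi_{s}:\mathcal{C}_{\mathrm{sp}}^{s}\rightarrow\mathcal{C}$ of Theorem~\ref{thm:support-primitive-decomposition}. By Lemma~\ref{lem:1} we have $s\mid N$, and $\mathcal{C}_{\mathrm{sp}}$ is a $\lambda$-constacyclic code of length $M=N/s$. This settles the length.

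For the dimension, I will use that $\Phi_{s}$ is an $\mathbb{F}_{q}$-linear isomorphism, so
$$k=\dim_{\mathbb{F}_{q}}\mathcal{C}=s\cdot\dim_{\mathbb{F}_{q}}\mathcal{C}_{\mathrm{sp}},$$
which gives $\dim\mathcal{C}_{\mathrm{sp}}=k/s$. As a consistency check, I will also derive this directly. We have $k=N-\deg f$ and $\dim\mathcal{C}_{\mathrm{sp}}=M-\deg f_{\mathrm{sp}}$. Corollary~\ref{coro:1} gives $\deg f_{\mathrm{sp}}=\deg f/s$, so $\dim\mathcal{C}_{\mathrm{sp}}=(N-\deg f)/s=k/s$.

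For the minimum distance, I will prove two inequalities. First, take any nonzero $c\in\mathcal{C}_{\mathrm{sp}}$. The tuple $(c,0,\ldots,0)\in\mathcal{C}_{\mathrm{sp}}^{s}$ is nonzero and has weight $\mathrm{wt}_{\mathrm{H}}(c)$, so its image under $\Phi_{s}$ is a nonzero codeword of $\mathcal{C}$ of the same weight. Hence $d\le d_{\mathrm{H}}(\mathcal{C}_{\mathrm{sp}})$. Conversely, a nonzero codeword of $\mathcal{C}$ is $\Phi_{s}(\alpha_{0},\ldots,\alpha_{s-1})$ with some $\alpha_{j}\neq0$. Its weight is $\sum_{j}\mathrm{wt}_{\mathrm{H}}(\alpha_{j})\ge\mathrm{wt}_{\mathrm{H}}(\alpha_{j})\ge d_{\mathrm{H}}(\mathcal{C}_{\mathrm{sp}})$. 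Together these give $d=d_{\mathrm{H}}(\mathcal{C}_{\mathrm{sp}})$.

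No step is a real obstacle. The only points needing care are two. First, the weight on $\mathcal{C}_{\mathrm{sp}}^{s}$ must be the sum of the componentwise weights, which is exactly how Theorem~\ref{thm:support-primitive-decomposition} computes it. Second, $\mathcal{C}_{\mathrm{sp}}$ must be nonzero so that its distance is defined; this holds because $\mathcal{C}$ is nonzero and $\Phi_{s}$ is injective.
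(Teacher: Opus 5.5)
Your proposal is correct and follows essentially the paper's argument: length from $s\mid N$, dimension via $\deg f_{\mathrm{sp}}=\deg f/s$ (your count $k=s\dim\mathcal{C}_{\mathrm{sp}}$ from the isomorphism is an equally valid shortcut), and distance via the weight-preserving $\Phi_s$ applied to $(\alpha,0,\ldots,0)$. Your two-inequality version spells out the additivity of weight that the paper uses only implicitly. One small slip: $\mathcal{C}_{\mathrm{sp}}\neq 0$ follows from $\Phi_s(\mathcal{C}_{\mathrm{sp}}^{s})=\mathcal{C}$ (surjectivity, or the count $k=s\dim\mathcal{C}_{\mathrm{sp}}$), not from injectivity.
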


\begin{proof}
	By definition $\mathcal{C}_{\mathrm{sp}}$ is a constacyclic code of length $\frac{N}{s}$, and Corollary \ref{coro:1} gives
	$$\mathrm{deg}(f_{\mathrm{sp}}) = \dfrac{\mathrm{deg}(f)}{s},$$
	which implies
	$$\mathrm{dim}_{\mathbb{F}_{q}}\mathcal{C}_{\mathrm{sp}} = \dfrac{N}{s} - \mathrm{deg}(f_{\mathrm{sp}}) = \dfrac{k}{s}.$$
	Assume that $\alpha(X) \in \mathcal{C}_{\mathrm{sp}}$ has Hamming weight $d$, the smallest Hamming weight among the nonzero element in $\mathcal{C}_{\mathrm{sp}}$, then
	$$(\alpha(X),0,\cdots,0) \in \mathcal{C}_{\mathrm{sp}}^{s}$$
	has the smallest Hamming weight among the nonzero element of $\mathcal{C}_{\mathrm{sp}}^{s}$. And as $\Phi_{s}: \mathcal{C}_{\mathrm{sp}}^{s} \rightarrow \mathcal{C}$ is a Hamming-weight-preserving isomorphism, $\Phi_{s}(\alpha(X),0,\cdots,0)$ has the smallest nonzero Hamming weight in $\mathcal{C}$, which is exactly $\mathrm{wt}_{\mathrm{H}}(\alpha)=d$.
\end{proof}

In Theorem \ref{thm:support-primitive-decomposition}, the support-primitive decomposition of a constacyclic code $\mathcal{C}$ is constructed via the support-primitive core of the generator polynomial of $\mathcal{C}$. Next we present another realization of the support-primitive decomposition, which shows that is is an intrinsic property of $\mathcal{C}$.

Let $s$ be a positive divisor of $N$. For each $0\leq j\leq s-1$, define
\[
	V_j^{(s)}
	=
	\left\{
	\sum_{i=0}^{N-1}a_iX^i\in \mathcal{R}_{N,\lambda}
	\;\middle|\;
	a_i=0\ \text{whenever}\ i\not\equiv j\pmod{s}
	\right\},
\]
and set
\begin{align*}
	\pi_{j}^{(s)}: \mathcal{R}_{N,\lambda} & \rightarrow V_{j}^{(s)}                 \\
	\sum_{i=0}^{N-1}a_{i}X^{i}             & \mapsto \sum_{\substack{0\leq i\leq N-1 \\ i \equiv j \pmod{s}}}a_{i}X^{i}.
\end{align*}
It is trivial to verify that $\pi_{j}^{(s)}$ is a surjective $\mathbb{F}_{q}$-linear map, and there is a canonical decomposition
\[
	\mathcal{R}_{N,\lambda}
	=
	\bigoplus_{j=0}^{s-1}V_j^{(s)}.
\]

\begin{proposition}\label{prop:1}
	Let $\mathcal{C} = (f(X)) \subseteq \mathcal{C}_{N,\lambda}$ be a $\lambda$-constacyclic code of length $N$, where $f(X)$ is the generator polynomial of $\mathcal{C}$. For any positive divisor $s$ of $N$, the following statements are equivalent:
	\begin{enumerate}
		\item[\rm (i)] $f(X)\in\mathbb{F}_q[X^s]$;

		\item[\rm (ii)] $\pi_j^{(s)}(\mathcal{C})\subseteq \mathcal{C}$ for any $0 \leq j \leq s-1$;

		\item[\rm (iii)] $\mathcal{C} = \bigoplus\limits_{j=0}^{s-1} (\mathcal{C}\cap V_j^{(s)})$.
	\end{enumerate}
	In particular, the support period $\mathrm{sp}(f)$ of $f(X)$ is the maximal divisor $s$ of $N$ for which
	\begin{equation}\label{eq:4}
		\mathcal{C} = \bigoplus_{j=0}^{s-1} (\mathcal{C}\cap V_j^{(s)}),
	\end{equation}
	and the decomposition \eqref{eq:4} is the support-primitive decomposition of $\mathcal{C}$.
\end{proposition}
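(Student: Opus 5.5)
I would prove the cycle (i)$\Rightarrow$(ii)$\Rightarrow$(iii)$\Rightarrow$(i) and then deduce the final claim from Theorem \ref{thm:1} and Theorem \ref{thm:support-primitive-decomposition}.

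For (i)$\Rightarrow$(ii), suppose $f(X)=h(X^s)$ for some $h$. The argument of Lemma \ref{lem:1} applies verbatim with $s$ in place of the support period, since it only used $f\in\mathbb{F}_q[X^s]$. It gives $h(X)\mid X^{M}-\lambda$ with $M=N/s$. The last paragraph of the proof of Theorem \ref{thm:support-primitive-decomposition} also only used $f=h(X^s)$. So it shows that $\Phi_s$ maps $(h)^s\subseteq\mathcal{R}_{M,\lambda}^s$ onto $\mathcal{C}$, that is, $c=\sum_j X^j\alpha_j(X^s)\in\mathcal{C}$ if and only if $h\mid\alpha_j$ for every $j$. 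Since $\pi_j^{(s)}(c)=X^j\alpha_j(X^s)=\Phi_s(0,\dots,\alpha_j,\dots,0)$, each projection again lies in $\mathcal{C}$.

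For (ii)$\Rightarrow$(iii), every $c\in\mathcal{C}$ satisfies $c=\sum_j\pi_j^{(s)}(c)$ with $\pi_j^{(s)}(c)\in\mathcal{C}\cap V_j^{(s)}$. The sum is direct because $\mathcal{R}_{N,\lambda}=\bigoplus_j V_j^{(s)}$.

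The main step is (iii)$\Rightarrow$(i). Write $f=\sum_j f_j$ with $f_j=\pi_j^{(s)}(f)$; here $f$ has degree $<N$, so it is its own reduced representative. Assuming (iii), each $f_j$ lies in $\mathcal{C}$, so $f\mid f_j$ in $\mathcal{R}_{N,\lambda}$. Since $f\mid X^N-\lambda$, this means $f\mid f_j$ as polynomials.
\begin{itemize}
\item For $j\ne0$ we have $f_j=X^j u_j(X^s)$. Because $f(0)\neq0$, $\gcd(f,X^j)=1$, so $f\mid u_j(X^s)$.
\item Hence $f$ divides $u_j(X^s)$ for every $j\neq 0$, and also $f_0=u_0(X^s)$ (by the same Cauchy argument), so $f$ divides each element $u_j(X^s)\in\mathbb{F}_q[X^s]$.
\end{itemize}
To conclude $f\in\mathbb{F}_q[X^s]$, I would pick the index $j$ with $a_\tau X^\tau$ in $f_j$, where $\tau=\deg f$. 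Then $\deg f_j=\deg f$, so $f_j=cf$ for a nonzero constant $c$; in fact $c=1$ since $f$ is monic. Thus $f=f_j$ and all other $f_{j'}$ vanish. Since $f(0)\neq0$, the constant term lies in $f_0$, which forces $j=0$. Therefore $f=f_0\in\mathbb{F}_q[X^s]$.

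This degree comparison is the only delicate point. One must check that $f_j$ is a genuine polynomial of degree $\le N-1$ divisible by $f$, not merely divisible modulo $X^N-\lambda$. That holds because $f\mid X^N-\lambda$ and ideal membership in $\mathcal{R}_{N,\lambda}$ lifts to polynomial divisibility.

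For the final claim, (i) holds exactly when $s$ divides every index in $\mathrm{Supp}(f)$, that is, when $s\mid\mathrm{sp}(f)$. By Lemma \ref{lem:1}, $\mathrm{sp}(f)\mid N$. So among the divisors of $N$ satisfying (iii), $\mathrm{sp}(f)$ is the maximum; indeed every such divisor divides it. For $s=\mathrm{sp}(f)$, the summand $\mathcal{C}\cap V_j^{(s)}$ equals $\Phi_s$ applied to the $j$-th copy of $\mathcal{C}_{\mathrm{sp}}$, by the description in the first step. So \eqref{eq:4} is the image under $\Phi_s$ of $\mathcal{C}_{\mathrm{sp}}^s=\bigoplus_j\mathcal{C}_{\mathrm{sp}}$, which is precisely the support-primitive decomposition.
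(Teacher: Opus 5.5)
Your proof is correct and is essentially the paper's argument. It uses the same three ingredients: $\Phi_s$ for the implication out of (i), the projection equivalence between (ii) and (iii), and the observation that a nonzero residue-class component of $f$ lying in $\mathcal{C}$ has degree at most $\deg f$ and so must equal $f$. The only structural difference is that you run the cycle as (i)$\Rightarrow$(ii)$\Rightarrow$(iii)$\Rightarrow$(i), whereas the paper runs (i)$\Rightarrow$(iii)$\Rightarrow$(ii)$\Rightarrow$(i).

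The paper applies the degree comparison directly to $\pi_0^{(s)}(f)$, which is nonzero since $f(0)\neq 0$, and gets the scalar equal to $1$ by comparing constant terms; this is a little shorter than your leading-term-component route. Your bullet points about $\gcd(f,X^j)=1$ and $f\mid u_j(X^s)$ are never used and can be dropped.
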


\begin{proof}
	Suppose first that \rm(i) holds.  Then there exists
	$g(X)\in\mathbb{F}_q[X]$ such that
	\[
		f(X)=g(X^s).
	\]
	Since $s\mid n$, write $n=sM$. By the same argument as in the proof of Theorem \ref{thm:support-primitive-decomposition}, the $\mathbb{F}_{q}$-linear map
	\[
		\Phi_s:\mathcal{R}_{M,\lambda}^{\,s}\longrightarrow \mathcal{R}_{N,\lambda}
	\]
	restricts to an isomorphism
	\[
		\Phi_s:\mathcal{C}_g^{\,s}\xrightarrow{\sim} \mathcal{C},
	\]
	where $\mathcal{C}_g=(g(X))\subseteq R_{M,\lambda}$.
	Moreover, under $\Phi_s$, the $j$-th copy of $R_{M,\lambda}$ is
	mapped onto $V_j^{(s)}$. Then every element of $\mathcal{C}$ has a unique
	decomposition
	\[
		c=\sum_{j=0}^{s-1}c_j,
	\]
	where $c_j\in \mathcal{C}\cap V_j^{(s)}$, and thus $\mathcal{C}$ has the direct-sum decomposition
	\[
		\mathcal{C}
		=
		\bigoplus_{j=0}^{s-1}
		\bigl(\mathcal{C}\cap V_j^{(s)}\bigr).
	\]
	Hence \rm(i) implies \rm(iii).

	Next, assume that \rm(iii) holds.  For any $c\in \mathcal{C}$, write
	\[
		c=\sum_{i=0}^{s-1}c_i,
	\]
	where $c_i\in \mathcal{C}\cap V_i^{(s)}$. By the definition of $\pi_j^{(s)}$, we have
	\[
		\pi_j^{(s)}(c)=c_j\in \mathcal{C}.
	\]
	Thus
	\[
		\pi_j^{(s)}(\mathcal{C})\subseteq \mathcal{C},
	\]
	for every $0 \leq j \leq s-1$, and consequently \rm(iii) implies \rm(ii).

	It remains to prove that \rm(ii) implies \rm(i).  Since $f(X)$
	generates $\mathcal{C}$, $f(X)\in \mathcal{C}$ and thus by \rm(ii)
	\[
		\pi_0^{(s)}(f)\in \mathcal{C}.
	\]
	Because $f(0)\neq0$, the polynomial $\pi_0^{(s)}(f)$ is nonzero. Then there exists a polynomial $h(X)$ with
	\[
		\deg h<n-\deg(f)
	\]
	such that
	\begin{equation}\label{eq:5}
		\pi_0^{(s)}(f)=h(X)f(X).
	\end{equation}
	As $\mathrm{deg}(\pi_{0}^{(s)}(f)) \leq \mathrm{deg}(f)$, \eqref{eq:5} forces $h$ to be a constant. Furthermore, comparing the constant term of \eqref{eq:5} gives $f(0) = h\cdot f(0)$, that is, $h=1$. Hence
	$$\pi_{0}^{(s)}(f) = f(X).$$
	By the definition of $\pi_{0}^{(s)}$, we have $f(X) \in \mathbb{F}_{q}[X^s]$, which proves \rm(i).

	Finally, as
	\[
		\operatorname{sp}(f)
		=
		\gcd\{j:j\in\operatorname{Supp}(f)\},
	\]
	then
	\[
		f(X)\in\mathbb{F}_q[X^s]
		\quad\Longleftrightarrow\quad
		s\mid d.
	\]
	Therefore the largest divisor $s$ of $n$ satisfying the equivalent
	conditions above is precisely $\mathrm{sp}$. Moreover, in this case, the isomorphism
	$$\Phi_{s}: \mathcal{C}_{\mathrm{sp}}^{s} \rightarrow \mathcal{C}$$
	maps the $j$-th copy of $\mathcal{C}_{\mathrm{sp}}$ onto $\mathcal{C} \cap V_{j}^{(s-1)}$ for any $1\leq j\leq s$. Thus the support-primitive decomposition of $\mathcal{C}$ can be realized as
	$$\mathcal{C} = \bigoplus_{j=0}^{s-1} (\mathcal{C}\cap V_j^{(s)}).$$
\end{proof}

\section{Support-primitive decomposition of constacyclic codes: the roots-based description}
This section provides an alternative description of the support-primitive decomposition of a constacyclic code, in terms of the set of roots of its generator polynomial. There are two motivations for such a description. One is that it intuitively reveals the essential connection between the symmetry of the coefficients of a polynomial and that of its roots. And as constacyclic codes are often constructed and classified via the roots set of their generator polynomials, a roots-based description of the support-primitive decomposition is convenient to apply in practice.

\subsection{The roots-based description of support-primitive decomposition}
Let $N = p^{k}n$ be a positive integer, where $k = v_{p}(N) \geq 0$ and $n$ is not divisible by $p$, and $\lambda$ be a nonzero element in $\mathbb{F}_{q}$ with order $\mathrm{\lambda} = r$. Each root of $X^{N}-\lambda$ is a $nr$-th root of unity and has multiplicity $p^{k}$. Let $f(X)$ be a nonconstant monic factor of $X^{N}-\lambda$, with defining function $\Omega_{f}$ with respect to $\zeta_{nr}$ given by
$$\Omega_{f}: T_{f} \rightarrow \{1,\cdots,p^{k}\}; \ \gamma \mapsto \Omega_{f}(\gamma) = \Omega(\gamma),$$
where $T_{f} \subseteq \mathbb{Z}/nr\mathbb{Z}$.

\begin{definition}
	With the notations given as above, we define the $p$-adic valuation of $\Omega_{f}$ by
	$$v_{f} = \min_{\gamma \in T_{f}}v_{p}(\Omega(\gamma)),$$
	and define the stabilizer of $\Omega_{f}$ to be the set
	$$\Sigma_{f} = \{a\in \mathbb{Z}/nr\mathbb{Z}\ | \ T_{f}+a = T_{f} \ \mathrm{and} \ \Omega(\gamma+a) = \Omega(\gamma) \ \mathrm{for} \ \mathrm{all} \ \gamma\in T_{f}\}.$$
	We denote by $\sigma_{f}$ the order of $\Sigma_{f}$.
\end{definition}

It is routine to verify that $\Sigma_{f}$ forms a subgroup of $\mathbb{Z}/nr\mathbb{Z}$. Since $\mathbb{Z}/nr\mathbb{Z}$ is a cyclic group, $\Sigma_{f}$ is also cyclic. The next lemma shows that $v_{f}$ and $\sigma_{f}$ give rise to exactly the $p$-part and the $p$-free part of the support period of $f(X)$ respectively.

\begin{lemma}\label{lem:2}
	The support period of $f(X)$ is
	$$\mathrm{sp}(f) = p^{v_{f}}\sigma_{f}.$$
\end{lemma}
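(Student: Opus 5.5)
The plan is to prove the two inequalities $p^{v_f}\sigma_f \mid \mathrm{sp}(f)$ and $\mathrm{sp}(f)\mid p^{v_f}\sigma_f$ by translating, in both directions, between ``$f(X)\in\mathbb{F}_q[X^t]$'' and symmetry of the defining function $\Omega_f$. Write $s=\mathrm{sp}(f)$ and split $s=p^{e}m$ with $p\nmid m$. By Lemma \ref{lem:1}, $s\mid N=p^kn$, so $e\le k$ and $m\mid n$. The basic observation is that $f(X)\in\mathbb{F}_q[X^t]$ if and only if $t\mid s$, since $s$ is the gcd of the support. So it suffices to show that for $t=p^{e'}m'$ with $p\nmid m'$,
\[
f\in\mathbb{F}_q[X^{t}]\iff e'\le v_f \text{ and } m'\mid \sigma_f .
\]
The target is then $s=p^{v_f}\sigma_f$. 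Note that $\sigma_f\mid nr$ is prime to $p$.

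\emph{The $p$-part.} First I will show that $f\in\mathbb{F}_q[X^{p^{e'}}]$ iff $p^{e'}$ divides every multiplicity $\Omega(\gamma)$. If every multiplicity is divisible by $p^{e'}$, then $f=g^{p^{e'}}$ with $g=\prod(X-\zeta^\gamma)^{\Omega(\gamma)/p^{e'}}$. The coefficients of $g$ lie in $\mathbb{F}_q$, because $g$ is the unique monic $p^{e'}$-th root of $f$ and Frobenius permutes the factors. Hence $f(X)=g^{(p^{e'})}(X^{p^{e'}})$, where $g^{(p^{e'})}$ denotes $g$ with its coefficients raised to the $p^{e'}$-th power. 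Conversely, if $f=h(X^{p^{e'}})$, then since $\mathbb{F}_q$ is perfect, $h(X^{p^{e'}})=\tilde h(X)^{p^{e'}}$, so every multiplicity is divisible by $p^{e'}$.

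\emph{The $p$-free part.} Next I will show that for $m'\mid n$ with $p\nmid m'$, $f\in\mathbb{F}_q[X^{m'}]$ iff $f(\omega X)=f(X)$ for a primitive $m'$-th root of unity $\omega$. One direction is clear. For the converse, compare coefficients: $a_j\omega^j=a_j$ forces $m'\mid j$ for every $j$ in the support. Now $\omega=\zeta_{nr}^{nr/m'}$, and $f(\omega X)=\omega^{\deg f}\prod(X-\zeta^{\gamma-nr/m'})^{\Omega(\gamma)}$. Comparing root multiplicities, the condition $f(\omega X)=f(X)$ says precisely that $a=nr/m'$ lies in $\Sigma_f$. (Equality of the leading coefficients then follows automatically.) Since $\Sigma_f$ is cyclic of order $\sigma_f$, it is the subgroup generated by $nr/\sigma_f$, so $nr/m'\in\Sigma_f$ iff $m'\mid\sigma_f$.

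\emph{Combining.} Because $\gcd(p^{e'},m')=1$, $f\in\mathbb{F}_q[X^{p^{e'}m'}]$ iff $f$ lies in both $\mathbb{F}_q[X^{p^{e'}}]$ and $\mathbb{F}_q[X^{m'}]$, since the support indices must be divisible by both. Taking $t=s$ gives $e\le v_f$ and $m\mid\sigma_f$, i.e.\ $s\mid p^{v_f}\sigma_f$. For the reverse, apply both steps with $e'=v_f$ and $m'=\sigma_f$: by construction $p^{v_f}\mid\Omega(\gamma)$ for all $\gamma$, and $m'=\sigma_f$ trivially divides $\sigma_f$. So $f\in\mathbb{F}_q[X^{p^{v_f}\sigma_f}]$, which gives $p^{v_f}\sigma_f\mid s$.

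The main obstacle is bookkeeping rather than a genuine gap. One point is checking that the roots of $f(\omega X)$ are exactly $\zeta^{\gamma-a}$ with matching multiplicities, keeping the sign convention of $\Sigma_f$ consistent; this is harmless because $\Sigma_f$ is a group, so $a\in\Sigma_f$ iff $-a\in\Sigma_f$. The other is confirming that $\sigma_f$ divides $nr$ and is coprime to $p$, so that the cyclic-subgroup characterization applies.
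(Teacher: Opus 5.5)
Your proposal is correct and follows essentially the paper's route: the $p$-part via perfectness of $\mathbb{F}_q$ (membership in $\mathbb{F}_q[X^{p^{e}}]$ is equivalent to being a $p^{e}$-th power, i.e.\ to all multiplicities being divisible by $p^{e}$), and the $p$-free part via the twist $f(\omega X)=f(X)$ together with $\Sigma_f=(nr/\sigma_f)\mathbb{Z}/nr\mathbb{Z}$. Your $p$-free step only uses $m'\mid nr$, so applying it with $m'=\sigma_f$ is legitimate; the paper just arranges the argument as paired inequalities showing $u=v_f$ and $\tau=\sigma_f$ for $\mathrm{sp}(f)=p^u\tau$, and it justifies your ``automatic'' matching of leading coefficients by writing $f(\zeta_\sigma X)=\eta f(X)$ and comparing constant terms, using $f(0)\neq 0$, to get $\eta=1$.
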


\begin{proof}
	For simplicity we write $v = v_{f}$ and $\sigma_{f} = \sigma$. Assume that $\mathrm{sp}(f) = p^{u}\tau$ where $u = v_{p}(\mathrm{sp}(f))$ and $\tau$ is coprime to $q$. It suffices to show $v = u$ and $\sigma = \tau$.

	Since $p^{u} \mid \mathrm{sp}(f)$, every nonzero coefficient of $f(X)$ has index divisible by $p^{u}$. Therefore $f(X)$ can be written as
	$$f(X) = g(X^{p^u})$$
	for some $g(X) \in \mathbb{F}_{q}[X]$. And since $a \rightarrow a^{p^u}$ is an automorphism of $\mathbb{F}_{q}$, there exists $g^{\prime}(X) \in \mathbb{F}_{q}[X]$ such that
	$$f(X) = g(X^{p^u}) = g^{\prime}(X)^{p^u}.$$
	It follows that the multiplicity of each root of $f(X)$ is divisible by $p^u$. By the definition of $v_{p}(\Omega_{f})$ one has
	$$v = v_{p}(\Omega_{f}) \leq u.$$

	Conversely, set
	$$h(X) = \prod_{\gamma \in T_{f}}(X-\zeta_{n}^{\gamma})^{\frac{\Omega(\gamma)}{p^v}}.$$
	As $h(X)^{p^v} = f(X)$, and also $a \mapsto a^{p^v}$ is an automorphism of $\mathbb{F}_{q}$, $h(X)$ is over $\mathbb{F}_{q}$. Write $h(X) = \sum\limits_{j}a_{j}X^{j}$, then $f(X) = \sum\limits_{j}a_{j}X^{jp^v}$, which implies $p^{v} \mid \mathrm{sp}(f)$ and consequently $v \leq u$. Hence we have $v = u$.

	Next we prove $\sigma = \tau$. Since $\tau \mid \mathrm{sp}(f)$, the exponent of $X$ in each term of $f(X)$ is a multiple of $\tau$. By Lemma \ref{lem:1} $\tau$ divides $n$. We set $\zeta_{\tau} = \zeta_{nr}^{\frac{nr}{\tau}}$, which is a primitive $\tau$-th root of unity. Then we have
	$$f(\zeta_{\tau}X) = f(X),$$
	which yields $\frac{nr}{\tau} \in \Sigma_{f}$ and consequently $\tau \mid | \Sigma_{f} | = \sigma$.

	For the converse direction, setting $\zeta_{\sigma} = \zeta_{nr}^{\frac{nr}{\sigma}}$, then $f(\zeta_{\sigma}X)$ and $f(X)$ have the same roots with the same multiplicities, which indicates that $f(\zeta_{\sigma}X) = \eta f(X)$ for some $\eta \in \mathbb{F}_{q}^{\ast}$. Write $f(X) = \sum\limits_{j \in J}a_{j}X^{j}$, then $f(\zeta_{\sigma}X) = \sum\limits_{j \in J}a_{j}\zeta_{\sigma}^{j}X^{j}$. Comparing the coefficients of the both sides of
	$$\eta(\sum\limits_{j \in J}a_{j}X^{j}) = \sum\limits_{j \in J}a_{j}\zeta_{\sigma}^{j}X^{j}$$
	leads to $\eta = 1$ and $\zeta_{\sigma}^{j} = 1$ for every $j > 0$ in $J$. Note that $\zeta_{\sigma}$ is a primitive $\sigma$-th root of unity, therefore each $j \in J$ is divisible by $\sigma$, which amounts to $\sigma \mid \mathrm{sp}(f)$. As $\sigma$ is coprime to $q$, then $\sigma \mid \tau$. Thus we obtain $\sigma = \tau$.
\end{proof}

\begin{corollary}
	The polynomial $f(X)$ is support-primitive if and only if $v_{f} = 0$ and $\Sigma_{f} = \{0\}$.
\end{corollary}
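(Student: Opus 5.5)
The corollary follows directly from Lemma~\ref{lem:2}, which gives $\mathrm{sp}(f)=p^{v_{f}}\sigma_{f}$. By definition, $f(X)$ is support-primitive exactly when $\mathrm{sp}(f)=1$. So the plan is to show that $p^{v_{f}}\sigma_{f}=1$ holds if and only if $v_f=0$ and $\Sigma_f=\{0\}$.

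Since $v_f\geq 0$ is an integer and $\sigma_f=|\Sigma_f|\geq 1$ (the stabilizer always contains $0$), the product $p^{v_f}\sigma_f$ of two positive integers equals $1$ if and only if both factors equal $1$. The first factor is $1$ exactly when $v_f=0$. The second is $1$ exactly when the subgroup $\Sigma_f\le \mathbb{Z}/nr\mathbb{Z}$ is trivial, i.e.\ $\Sigma_f=\{0\}$.

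Both directions follow at once from these observations, so there is no real obstacle. The one point to check is that $0\in\Sigma_f$, so that $\sigma_f\geq1$ and $\sigma_f=1$ really forces $\Sigma_f=\{0\}$; this is immediate from the definition, since translating by $a=0$ fixes $T_f$ and $\Omega$. One may also note that $v_f\ge 0$ because every multiplicity satisfies $\Omega(\gamma)\geq1$.
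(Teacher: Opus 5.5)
Your proposal is correct and matches the paper, which states this corollary without proof as an immediate consequence of Lemma~\ref{lem:2}: $\mathrm{sp}(f)=p^{v_f}\sigma_f$ is a product of two positive integers, so it equals $1$ exactly when $v_f=0$ and $\sigma_f=1$, i.e.\ $\Sigma_f=\{0\}$. Your remark that $0\in\Sigma_f$ (so $\sigma_f\geq 1$) is the only detail that needs checking, and you handle it correctly.
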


Let $N$ and $\lambda$ be defined as above, and let $\mathcal{C} = (f(X)) \subseteq \mathcal{R}_{N,\lambda}$ be a $\lambda$-constacyclic code of length $N$ over $\mathbb{F}_{q}$, where $f(X) \mid X^{N}-\lambda$ is the generator polynomial of $\mathcal{C}$. Denote by
$$\Omega_{f}: T_{f} \rightarrow \{1,\cdots,p^{k}\}$$
the defining function of $f(X)$, by $v = v_{f}$ the $p$-adic function of $\Omega_{f}$, and by $\sigma = \sigma_{f}$ the order of the stabilizer $\Sigma_{f}$ of $\Omega_{f}$.

Set
\[
	s=p^v\sigma,\qquad
	m=\frac{n}{\sigma},
	\qquad
	M=\frac{N}{s}=p^{k-v}m.
\]
Let
\[
	\phi:\mathbb Z/nr\mathbb Z\longrightarrow \mathbb Z/mr\mathbb Z
\]
be the map sending the residue class $\gamma+nr\mathbb{Z}$ to the class $p^{v}\gamma+mr\mathbb{Z}$. Since $\mathrm{gcd}(p,mr)=1$, the map $\phi$ is well defined and
surjective. Let $T_f^{\mathrm{sp}}$ be the image of $T_f$ under
$\phi$. Define a function
\[
	\Omega_f^{\mathrm{sp}}
	:
	T_f^{\mathrm{sp}}
	\longrightarrow
	\{1,\cdots,p^{k-v}\}
\]
by
\[
	\Omega_f^{\mathrm{sp}}(\gamma)
	=
	\frac{\Omega_f(\phi^{-1}(\gamma))}{p^v},
\]
where $\phi^{-1}(\gamma)$ denotes any preimage of
$\gamma$ in $T_f$. As $v_p(\Omega_f)=v$, we have
\[
	p^v\mid
	\Omega_f(\phi^{-1}(\gamma))
\]
for any $\gamma\in T_f^{\mathrm{sp}}$. And since the stabilizer
$\Sigma_f$ of $\Omega_f$ is cyclic with order $\sigma$, we have
\[
	\Sigma_f
	=
	mr\mathbb Z/nr\mathbb Z,
\]
which is exactly the kernel of $\phi$. It follows that the
value $\Omega_f(\phi^{-1}(\gamma))$ does not depend on the choice of the preimage
$\phi^{-1}(\gamma)$ of $\gamma$. Thus the function
$\Omega_f^{\mathrm{sp}}$ is well-defined. Denote by
$$
	f_{\mathrm{sp}}(X)
	=
	\prod_{\gamma\in T_f^{\mathrm{sp}}}
	\left(X-\zeta_{mr}^{\gamma}\right)^{
	\Omega_f^{\mathrm{sp}}(\gamma)
	}$$
the polynomial induced by $\Omega_{f}^{\mathrm{sp}}$. The roots-based construction of the support-primitive
decomposition of $\mathcal{C}$ is given by the following theorem.

\begin{theorem}\label{thm:geometric-description-of-s-p-decomposition}
	The polynomial
	\[
		f_{\mathrm{sp}}(X)
		=
		\prod_{\gamma\in T_f^{\mathrm{sp}}}
		\left(X-\zeta_{mr}^{\gamma}\right)^{
		\Omega_f^{\mathrm{sp}}(\gamma)
		}
	\]
	is the support-primitive core of $f(X)$, and the $\lambda$-constacyclic
	code
	\[
		\mathcal{C}_{\mathrm{sp}}
		=
		(f_{\mathrm{sp}}(X))
		\subseteq R_{M,\lambda}
	\]
	of length $M$ generated by $f_{\mathrm{sp}}(X)$ is the support-primitive
	core of $\mathcal{C}$. Therefore the support-primitive decomposition of $\mathcal{C}$ is
	given by
	\[
		\mathcal{C}\cong \mathcal{C}_{\mathrm{sp}}^{\oplus s}.
	\]
\end{theorem}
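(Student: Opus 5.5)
The plan is to show that the polynomial $f_{\mathrm{sp}}(X)$ built from $\Omega_f^{\mathrm{sp}}$ satisfies $f_{\mathrm{sp}}(X^{s})=f(X)$ with $s=p^{v}\sigma$. Once this identity is in hand, Lemma \ref{lem:2} gives $s=\mathrm{sp}(f)$, and the uniqueness part of Theorem \ref{thm:1} identifies $f_{\mathrm{sp}}$ with the support-primitive core. Lemma \ref{lem:1} then gives $f_{\mathrm{sp}}\mid X^{M}-\lambda$, so $\mathcal C_{\mathrm{sp}}=(f_{\mathrm{sp}})\subseteq\mathcal R_{M,\lambda}$ is the support-primitive core of $\mathcal C$. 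Finally, Theorem \ref{thm:support-primitive-decomposition} yields $\mathcal C\cong\mathcal C_{\mathrm{sp}}^{\oplus s}$. So the whole content is the polynomial identity.

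I would split the identity into two stages, matching $s=p^{v}\cdot\sigma$.

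\textbf{Stage 1 ($p$-part).} As in the proof of Lemma \ref{lem:2}, let
$$h(X)=\prod_{\gamma\in T_f}(X-\zeta_{nr}^{\gamma})^{\Omega(\gamma)/p^{v}}.$$
Then $h\in\mathbb F_q[X]$, $h^{p^{v}}=f$, and $f(X)=h^{(p^{v})}(X^{p^{v}})$, where $h^{(p^{v})}$ is obtained from $h$ by raising each coefficient to the $p^{v}$-th power. The roots of $h^{(p^{v})}$ are the $p^{v}$-th powers $\zeta_{nr}^{p^{v}\gamma}$ of the roots of $h$. Since $p^{v}$ is coprime to $nr$, the map $\gamma\mapsto p^{v}\gamma$ is a bijection of $\mathbb Z/nr\mathbb Z$. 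Hence
$$h^{(p^{v})}(X)=\prod_{\gamma\in T_f}(X-\zeta_{nr}^{p^{v}\gamma})^{\Omega(\gamma)/p^{v}},$$
and its stabilizer is $p^{v}\Sigma_f=\Sigma_f$ (the subgroup of order $\sigma$ is characteristic).

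\textbf{Stage 2 ($\sigma$-part).} This is the main step. For a monic $g$ with $g(0)\ne0$ whose root multiset, written in exponents mod $nr$, is invariant under translation by $\Sigma=mr\mathbb Z/nr\mathbb Z$, I will show
$$g(X)=\prod_{\bar\gamma}(X^{\sigma}-\zeta_{nr}^{\sigma\gamma})^{\mathrm{mult}},$$
where $\bar\gamma$ runs over the cosets of $\Sigma$. The proof is to group the roots in each coset and use
$$\prod_{t=0}^{\sigma-1}\bigl(X-\zeta_{nr}^{\gamma}\zeta_{\sigma}^{t}\bigr)=X^{\sigma}-\zeta_{nr}^{\sigma\gamma},$$
where $\zeta_{\sigma}=\zeta_{nr}^{mr}$. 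This works because $nr/\sigma=mr$ and the coset $\gamma+\Sigma$ is exactly $\{\gamma+tmr\}$.

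Combining the two stages, the root $\zeta_{nr}^{\gamma}$ of $f$ contributes a factor $(X^{s}-\zeta_{nr}^{\sigma p^{v}\gamma})$ per coset. By the compatibility convention, $\zeta_{nr}^{\sigma}=\zeta_{mr}$, so this factor is $(X^{s}-\zeta_{mr}^{p^{v}\gamma})=(X^{s}-\zeta_{mr}^{\phi(\gamma)})$. Its multiplicity is $\Omega(\gamma)/p^{v}=\Omega_f^{\mathrm{sp}}(\phi(\gamma))$, with each coset counted once. This gives $f(X)=f_{\mathrm{sp}}(X^{s})$.

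The main obstacle is the bookkeeping in Stage 2. First, I must check that $\phi$ restricted to $T_f$ has fibres exactly the $\Sigma_f$-cosets, which holds because $\ker\phi=\Sigma_f$ and $T_f$ is $\Sigma_f$-stable. Second, I must check that the compatibility $\zeta_{nr}^{\sigma}=\zeta_{mr}$ is legitimate, which holds because $mr\mid nr$ and both are prime to $q$. Third, I must verify that $\Sigma_f$ really equals $mr\mathbb Z/nr\mathbb Z$, which is immediate since it is the unique subgroup of order $\sigma$ of the cyclic group $\mathbb Z/nr\mathbb Z$.

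One point remains: that $f_{\mathrm{sp}}$ has coefficients in $\mathbb F_q$ and is support-primitive. I would not verify these directly. They follow because $f_{\mathrm{sp}}(X^{s})=f(X)$ and $s=\mathrm{sp}(f)$ by Lemma \ref{lem:2}, so $f_{\mathrm{sp}}$ coincides with the core produced in Theorem \ref{thm:1}, by injectivity of $X\mapsto X^{s}$.
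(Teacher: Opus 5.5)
Your proposal is correct and follows essentially the paper's argument. You group the roots in each $\Sigma_f$-coset into a binomial $X^{\sigma}-\zeta_{mr}^{\gamma}$, absorb the $p^{v}$-part via Frobenius, and conclude with Lemma~\ref{lem:2}, the uniqueness in Theorem~\ref{thm:1}, and Theorem~\ref{thm:support-primitive-decomposition}. The only difference is the order: the paper groups cosets in $f$ first and applies $(Y^{p^v}-a^{p^v})=(Y-a)^{p^v}$ factor by factor when expanding $f_{\mathrm{sp}}(X^{s})$, whereas you first extract the $p^{v}$-th root $h$ and then group cosets in the twist $h^{(p^{v})}$.
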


\begin{proof}
	First we prove $f_{\mathrm{sp}}(X^s) = f(X)$. Since $\Sigma_{f} = mr\mathbb{Z}/nr\mathbb{Z}$, by the definition of $\Sigma_{f}$ one has $mr+T_{f} = T_{f}$, and consequently $T_{f}$ can be written in the form
	$$T_{f} = \bigsqcup_{i=1}^{t} (\gamma_{i}+mr\mathbb{Z}/nr\mathbb{Z}).$$
	Then we obtain
	$$T_{f}^{\mathrm{sp}} = \phi(T_{f}) = \{p^{v}\gamma_{1},\cdots,p^{v}\gamma_{t}\},$$
	which yields
	$$f_{\mathrm{sp}}(X) = \prod_{i=1}^{t}(X-\zeta_{mr}^{p^{v}\gamma_{i}})^{\frac{\Omega_{f}(\gamma_{i})}{p^v}}.$$
	For every $1 \leq i\leq t$, the subset $\gamma_{i}+mr\mathbb{Z}/nr\mathbb{Z}$ of $T_{f}$ induces a factor
	\begin{align*}
		f_{i}(X) & = \prod_{j=0}^{\sigma-1}(X-\zeta_{nr}^{\gamma_{i}+jmr})                      \\
		         & =\zeta_{nr}^{\sigma \gamma_{i}}\cdot\prod_{j=0}^{\sigma-1}(X-\zeta_{\sigma}) \\
		         & =X^{\sigma} - \zeta_{mr}^{\gamma_{i}}.
	\end{align*}
	Also by the definition of $\Sigma_{f}$, $\Omega_{f}(\gamma+mr) = \Omega_{f}(\gamma)$ for any $\gamma \in T_{f}$, which indicates that $\Omega_{f}$ is constant when restricting on each component $\gamma_{i}+mr\mathbb{Z}/nr\mathbb{Z}$. Thus
	$$f(X) = \prod_{i=1}^{t}f_{i}(X)^{\Omega_{f}(\gamma_{i})} = \prod_{i=1}^{t}(X^{\sigma}-\zeta_{mr}^{\gamma_{i}})^{\Omega_{f}(\gamma_{i})}.$$
	On the other hand,
	$$f_{\mathrm{sp}}(X^s) = \prod_{i=1}^{t}(X^{p^{v}\sigma}-\zeta_{mr}^{p^{v}\gamma_{i}})^{\frac{\Omega_{f}(\gamma_{i})}{p^v}} = \prod_{i=1}^{t}(X^{\sigma}-\zeta_{mr}^{\gamma_{i}})^{\Omega_{f}(\gamma_{i})},$$
	which coincides with $f(X)$.

	As $f_{\mathrm{sp}}(X^s) = f(X) \in \mathbb{F}_{q}[X]$, $f_{\mathrm{sp}}(X)$ is over $\mathbb{F}_{q}$. By Lemma \ref{lem:2} $s = p^{v}\sigma$ is the support period of $f(X)$, therefore $f_{\mathrm{sp}}(X)$ is the support-primitive core of $f(X)$. The last two assertions follow from Theorem \ref{thm:support-primitive-decomposition}.
\end{proof}

\subsection{The support-primitive decomposition of simple-root constacyclic codes}
In this subsection we focus on the case of simple-root constacyclic codes, and show that the support-primitive decomposition of any simple-root constacyclic code can be obtained from the coarsest multiple equal-difference representation of the defining set of its generator polynomial, which is introduced in \cite{Zhu3}.

Let $n$ be a positive integer coprime to $q$, and $\lambda$ be a nonzero element in $\mathbb{F}_{q}$ with order $r$. Let $f(X)$ be a nonconstant monic factor of $X^{n}-\lambda$. As explained in Section \ref{sec:2}, in this case the defining function of $f(X)$ is reduced to its defining set $T_{f} \subseteq \mathbb{Z}/nr\mathbb{Z}$, that is, $f(X)$ can be fully determined by $T_{f}$ as
$$f(X) = \prod_{\gamma \in T_{f}}(X-\zeta_{nr}^{\gamma}).$$

Recall that a subset $E \subseteq \mathbb{Z}/n\mathbb{Z}$ is called an equal-difference set if it has the form
$$E = \{\gamma,\gamma+d,\cdots,\gamma+(\frac{n}{d}-1)d\},$$
where $d$ is a positive divisor of $n$ and is called the common difference of $E$.

\begin{definition}[\cite{Zhu3}, Definition $3.11$]
	A multiple equal-difference representation (MED representation) of $T_{f}$ is a partition
	\begin{equation}\label{eq:6}
		T_{f} = \bigsqcup_{i \in I}E_{i},
	\end{equation}
	where $E_{i}$, $i \in I$, are equal-difference sets with the same common difference $d$. The integer $d$ is called the common difference of the MED representation \eqref{eq:6}. We denote by $\mathcal{MER}(T_{f})$ the space of all MED representations of $T_{f}$.
\end{definition}

There is a natural order on the space $\mathcal{MER}(T_{f})$. For two MED representations
$$T_{f} = \bigsqcup_{i \in I}E_{i} = \bigsqcup_{j \in J}E_{j}^{\prime}$$
of $T_{f}$. If the index set $J$ can be partitioned as $J = \bigsqcup\limits_{i \in I}J_{i}$, and for each $i \in I$ the equal-difference set $E_{i}$ can be further decomposed as
$$E_{i} = \bigsqcup_{j \in J_{i}}E_{j}^{\prime},$$
then we say that $\bigsqcup\limits_{i \in I}E_{i}$ is coarser than $\bigsqcup\limits_{j \in J}E_{j}^{\prime}$, and write $\bigsqcup\limits_{i \in I}E_{i} \geq \bigsqcup\limits_{j \in J}E_{j}^{\prime}$.

Define the stabilizer of $T_{f}$ as
$$\Sigma_{f} = \{a \in \mathbb{Z}/nr\mathbb{Z} \ | \ a+T_{f} = T_{f}\},$$
and, viewing elements in $\Sigma_{f}$ as integers lying in $\{1,\cdots,nr\}$, define a subset $\Sigma_{f}^{\ast}$ of $\Sigma_{f}$ by
$$\Sigma_{f}^{\ast} = \{a \in \Sigma_{f} \ | \ a \ \mathrm{divides} \ nr\}.$$
Consider the following order on $\Sigma_{f}^{\ast}$. For $a_{1},a_{2} \in \Sigma_{f}^{\ast}$, we write $a_{1} \leq a_{2}$ if $a_{1} \mid a_{2}$. Then the stabilizer $\Sigma_{f}$ can be recovered from $\Sigma_{f}^{\ast}$ by
$$\Sigma_{f} = d\mathbb{Z}/nr\mathbb{Z},$$
where $d$ is the smallest element in $\Sigma_{f}^{\ast}$.

The complete classification of MED representations of $T_{f}$ is given as follows.

\begin{theorem}[\cite{Zhu3}, Theorem $3.17$]\label{thm:MED-representation}
	There is an anti-order-preserving one-to-one correspondence between $\Sigma_{f}^{\ast}$ and $\mathcal{MER}(T_{f})$, which maps each $a \in \Sigma_{f}^{\ast}$ to the unique MED representation of $T_{f}$ with common difference $a$.
\end{theorem}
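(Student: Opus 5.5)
The plan is to build the correspondence directly. For $a\in\Sigma_f^{\ast}$, I will show that $T_f$ splits into the cosets of $a\mathbb{Z}/nr\mathbb{Z}$ that it meets, and that this splitting is an MED representation with common difference $a$. Conversely, I will show that every MED representation arises this way from a unique $a\in\Sigma_f^{\ast}$. Throughout I work in $\mathbb{Z}/nr\mathbb{Z}$. An equal-difference set with common difference $d$, where $d\mid nr$, is exactly a coset $\gamma+d\mathbb{Z}/nr\mathbb{Z}$.

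\emph{Step 1 (existence).} Let $a\in\Sigma_f^{\ast}$. Then $a\mid nr$ and $a+T_f=T_f$, so $T_f$ is invariant under the subgroup $\langle a\rangle=a\mathbb{Z}/nr\mathbb{Z}$. It is therefore a union of cosets of $\langle a\rangle$. These cosets are pairwise disjoint equal-difference sets with common difference $a$, so they give an element of $\mathcal{MER}(T_f)$ with common difference $a$.

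\emph{Step 2 (converse and uniqueness).} Suppose $T_f=\bigsqcup_{i\in I}E_i$ with each $E_i$ equal-difference of common difference $d$. Each $E_i=\gamma_i+d\mathbb{Z}/nr\mathbb{Z}$ is stable under adding $d$, so $d+T_f=T_f$. Hence $d\in\Sigma_f$, and since $d\mid nr$ we get $d\in\Sigma_f^{\ast}$. The representation is then exactly the decomposition of $T_f$ into $\langle d\rangle$-cosets. Indeed, the $E_i$ are $\langle d\rangle$-cosets that partition $T_f$, and the coset partition of an invariant set is unique. So for each $a$ there is exactly one MED representation with common difference $a$, and the two assignments are mutually inverse. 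The one point of care is the convention that elements of $\Sigma_f$ are taken as integers in $\{1,\dots,nr\}$. With this convention, $a=nr$ corresponds to $0$ and gives the partition of $T_f$ into singletons. I will note this so the edge case is covered.

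\emph{Step 3 (order reversal).} This is where I expect the only real bookkeeping. Suppose $a_1\mid a_2$ in $\Sigma_f^{\ast}$. Then $\langle a_2\rangle\subseteq\langle a_1\rangle$, so each $\langle a_1\rangle$-coset is a disjoint union of $\langle a_2\rangle$-cosets. Each of these lies in $T_f$, since $T_f$ is $\langle a_2\rangle$-invariant. Thus the representation attached to $a_1$ is coarser than the one attached to $a_2$, so $a_1\le a_2$ implies the image of $a_1$ is $\ge$ the image of $a_2$.

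For the converse direction, suppose the representation for $a_1$ is coarser than that for $a_2$. Then some $\langle a_1\rangle$-coset $E$ is a disjoint union of $\langle a_2\rangle$-cosets. In particular $\gamma+a_2\in E=\gamma+\langle a_1\rangle$, so $a_2\in\langle a_1\rangle=\gcd(a_1,nr)\mathbb{Z}=a_1\mathbb{Z}$. Since $a_1\mid nr$, this gives $a_1\mid a_2$. So the bijection is an anti-isomorphism of posets, which proves the theorem.
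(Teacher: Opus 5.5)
The paper gives no proof of this statement. It imports the result from \cite{Zhu3} and uses it as a black box, so there is no in-paper argument to compare yours with. Your direct argument is correct and self-contained.

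You identify an equal-difference set of common difference $d\mid nr$ with a coset of $d\mathbb{Z}/nr\mathbb{Z}$. The statement then becomes a group-theoretic fact. Write $\Sigma_f=d_0\mathbb{Z}/nr\mathbb{Z}$; then $\Sigma_f^{\ast}=\{a:\ d_0\mid a\mid nr\}$ is in inclusion-reversing bijection with the subgroups of $\Sigma_f$. The MED representations of $T_f$ are exactly the partitions of $T_f$ into orbits of these subgroups. Your Steps 1--3 carry this out, and several details are right:
\begin{itemize}
\item You handle the edge case $a=nr\leftrightarrow 0$, which gives the singleton partition.
\item You show both directions of the order reversal, which is slightly stronger than the stated anti-order-preservation.
\item In the converse you correctly note that the preimage of $a_1\mathbb{Z}/nr\mathbb{Z}$ in $\mathbb{Z}$ is $a_1\mathbb{Z}$ because $a_1\mid nr$. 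Your notation there conflates the subgroup with its preimage, but the reasoning is sound.
\end{itemize}

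You use one point tacitly: an MED representation determines its own common difference. Every block of a representation with common difference $d$ has exactly $nr/d$ elements, and $T_f\neq\emptyset$ because $f$ is nonconstant. So distinct $a\in\Sigma_f^{\ast}$ give distinct partitions. This is what makes $a\mapsto(\text{coset partition})$ injective. It is also what makes the inverse assignment, representation $\mapsto$ its common difference, well defined. Your Step 2 shows only that each $a$ has exactly one representation, which is a different claim, so add one sentence there.
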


Let
$$T_{f} = \bigsqcup_{i=1}^{\frac{\tau d}{nr}}E_{i}$$
be the coarsest MED representation of $T_{f}$, where $\tau$ is the size of $T_{f}$ and $d$ is the smallest integer in $\Sigma_{f}^{\ast}$. Then each $E_{i}$ is an equal-difference subset of $\mathbb{Z}/nr\mathbb{Z}$ with common difference $d$. Write
$$E_{i} = \gamma_{i}+d\mathbb{Z}/nr\mathbb{Z}$$
for $1 \leq i \leq \frac{\tau d}{nr}$. Then $E_{i}$ induces a binomial
\begin{align*}
	f_{i}(X) & = (X-\zeta_{nr}^{\gamma_{i}})(X-\zeta_{nr}^{\gamma_{i}+d})\cdots(X-\zeta_{nr}^{\gamma_{i}+(\frac{nr}{d}-1)d})                                                                                                     \\
	         & =\zeta_{nr}^{\frac{\gamma_{i} nr}{d}}(\frac{X}{\zeta_{nr}^{\gamma_{i}}}-1)(\frac{X}{\zeta_{nr}^{\gamma_{i}}}-\zeta_{\frac{nr}{d}})\cdots(\frac{X}{\zeta_{nr}^{\gamma_{i}}}-\zeta_{\frac{nr}{d}}^{\frac{nr}{d}-1}) \\
	         & =\zeta_{nr}^{\frac{\gamma_{i} nr}{d}}((\frac{X}{\zeta_{nr}^{\gamma_{i}}})^{\frac{nr}{d}}-1)                                                                                                                       \\
	         & =X^{\frac{nr}{d}}-\zeta_{d}^{\gamma_{i}}.
\end{align*}
Define
$$f_{\mathrm{sp}}(X) = \prod_{i=1}^{\frac{\tau d}{nr}}(X-\zeta_{d}^{\gamma_{i}}).$$
Since $f(X) = \prod\limits_{i=1}^{\frac{\tau d}{nr}}f_{i}(X) = f_{\mathrm{sp}}(X^{\frac{nr}{d}})$, the polynomial $f_{\mathrm{sp}}(X)$ is over $\mathbb{F}_{q}$. The next theorem shows that $f_{\mathrm{sp}}(X)$ is exactly the
support-primitive core of $f(X)$, and hence gives rise to the
support-primitive decomposition of the constacyclic code
$\mathcal{C}=(f(X))\subseteq \mathcal{R}_{n,\lambda}$ generated by $f(X)$.

\begin{theorem}
	The polynomial
	$$
		f_{\mathrm{sp}}(X)
		=
		\prod_{i=1}^{\frac{\tau d}{nr}}
		\left(X-\zeta_d^{\gamma_i}\right)
	$$
	is the support-primitive core of $f(X)$. Setting $m = \frac{d}{r}$, the $\lambda$-constacyclic
	code
	$$
		\mathcal{C}_{\mathrm{sp}}
		=
		(f_{\mathrm{sp}}(X))
		\subseteq
		\mathcal{R}_{m,\lambda}
	$$
	of length $m$ generated by $f_{\mathrm{sp}}(X)$ is the
	support-primitive core of $\mathcal{C}$. Therefore the support-primitive
	decomposition of $\mathcal{C}$ is given by
	$$
		\mathcal{C}\cong \mathcal{C}_{\mathrm{sp}}^{\, \frac{n}{m}}.
	$$

\end{theorem}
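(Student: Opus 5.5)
The plan is to reduce the statement to the repeated-root result, Theorem~\ref{thm:geometric-description-of-s-p-decomposition}, specialised to $v_p(N)=0$. By construction (the binomial computation preceding the statement) we already have
\[
f(X)=f_{\mathrm{sp}}\bigl(X^{nr/d}\bigr),\qquad f_{\mathrm{sp}}(X)\in\mathbb F_q[X].
\]
By the uniqueness part of Theorem~\ref{thm:1}, it therefore suffices to show two things: that $nr/d=\mathrm{sp}(f)$, and that $f_{\mathrm{sp}}$ is support-primitive. Both follow once we know $\mathrm{sp}(f)=nr/d$. Indeed, if $f(X)=g(X^{s})$ with $s=\mathrm{sp}(f)$, then injectivity of $X\mapsto X^{s}$ identifies $g$ with the polynomial constructed here. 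Alternatively, $\mathrm{sp}(f_{\mathrm{sp}})\cdot (nr/d)$ divides $\mathrm{sp}(f)$, which forces $\mathrm{sp}(f_{\mathrm{sp}})=1$.

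To compute $\mathrm{sp}(f)$, I would apply Lemma~\ref{lem:2} with $k=0$. Then $\Omega_f\equiv 1$, so $v_f=0$ and the stabilizer is simply $\Sigma_f=\{a: a+T_f=T_f\}$; this gives $\mathrm{sp}(f)=|\Sigma_f|$. Since $d$ is the smallest element of $\Sigma_f^{\ast}$ and $\Sigma_f=d\mathbb Z/nr\mathbb Z$, we get $|\Sigma_f|=nr/d$. Hence $s=\mathrm{sp}(f)=nr/d$, which coincides with the exponent in $f=f_{\mathrm{sp}}(X^{nr/d})$. This also matches Theorem~\ref{thm:geometric-description-of-s-p-decomposition}: there $\sigma=nr/d$ and $mr=d$, and the map $\phi$ sends $\gamma_i\mapsto \gamma_i \bmod d$, so the root $\zeta_{mr}^{\gamma_i}=\zeta_d^{\gamma_i}$ agrees with the present construction. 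In fact, one could simply cite that theorem and observe that the coset decomposition of $T_f$ modulo $\Sigma_f$ is exactly the coarsest MED representation, via Theorem~\ref{thm:MED-representation}.

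For the code statement, Lemma~\ref{lem:1} gives $s\mid n$ and $f_{\mathrm{sp}}\mid X^{n/s}-\lambda$. Then Theorem~\ref{thm:support-primitive-decomposition} yields the isomorphism $\mathcal C\cong\mathcal C_{\mathrm{sp}}^{s}$ with $\mathcal C_{\mathrm{sp}}\subseteq\mathcal R_{n/s,\lambda}$.

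The remaining task is to match the length. Here $n/s=nd/(nr)=d/r=m$, so $s=n/m$, as claimed. The main obstacle is checking that $m=d/r$ is an integer. This holds because $s=nr/d$ divides $n$ by Lemma~\ref{lem:1}, so $d=nr/s$ is a multiple of $r$. This integrality also explains why the roots $\zeta_d^{\gamma_i}$ are indeed roots of $X^{m}-\lambda$ in $\mathbb Z/mr\mathbb Z=\mathbb Z/d\mathbb Z$. Finally, I would note the bookkeeping check that the number $\tau d/(nr)$ of blocks equals $\deg f_{\mathrm{sp}}=\deg f/s$, consistent with Corollary~\ref{coro:1}.
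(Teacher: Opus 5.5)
Your proposal is correct and follows essentially the same route as the paper. You identify $\Sigma_f=d\mathbb Z/nr\mathbb Z$, apply Lemma~\ref{lem:2} with $v_f=0$ to get $\operatorname{sp}(f)=|\Sigma_f|=nr/d=n/m$, and conclude from $f(X)=f_{\mathrm{sp}}(X^{nr/d})$ together with Theorems~\ref{thm:1} and~\ref{thm:support-primitive-decomposition}; the paper instead cites Theorem~\ref{thm:geometric-description-of-s-p-decomposition}, which packages these same steps. The one minor difference is how $r\mid d$ is obtained: you derive it from Lemma~\ref{lem:1} via $s\mid n$, while the paper observes that all elements of $T_f$ are congruent modulo $r$, so $d\in\Sigma_f$ forces $d\equiv 0\pmod r$. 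Both arguments are valid.
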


\begin{proof}
	We first note that $r$ divides $d$. Indeed, since $d$ is the common
	difference of the MED representation and
	$$
		T_f\subseteq\mathbb Z/nr\mathbb Z
	$$
	is the defining set of a factor of $X^n-\lambda$, all elements of
	$T_f$ are congruent modulo $r$. In particular, $\gamma_i+d\in T_f$ whenever $\gamma_i\in T_f$, and therefore
	$$
		d\equiv0\pmod r.
	$$
	Thus $m = \frac{d}{r}$ is a positive integer.

	According to Theorem \ref{thm:MED-representation}, the common difference $d$ of the coarsest MED representation
	$$T_{f} = \bigsqcup_{i=1}^{\frac{\tau m}{n}}(\gamma_{i}+d\mathbb{Z}/nr\mathbb{Z})$$
	of $T_{f}$ is the smallest integer in $\Sigma_{F}^{\ast}$. Then the stabilizer $\Sigma_{f}$ of $T_{f}$ is
	$$\Sigma_{f} = d\mathbb{Z}/nr\mathbb{Z}$$
	and thus has size $\frac{nr}{d} = \frac{n}{m}$. Since $n$ is coprime to $q$, by Lemma \ref{lem:2} $\frac{n}{m}$ is exactly the support period of $f(X)$. Now the conclusion follows directly from Theorem \ref{thm:geometric-description-of-s-p-decomposition}.
\end{proof}

\section{Coding-theoretic consequences of support-primitive decomposition of constacyclic codes}
In this section, we assume that
\[
	\mathcal{C}=(f(X))\subseteq \mathcal{C}_{N,\lambda}
	=\mathbb F_q[X]/(X^N-\lambda)
\]
is a $\lambda$-constacyclic code of length $N$ over $\mathbb{F}_{q}$, where $f(X) \mid X^{N}-\lambda$ is the generator polynomial of $\mathcal{C}$. Denote by $s = =\operatorname{sp}(f)$ the support period of $f(X)$, and by $f_{\mathrm{sp}}(X)$ the support-primitive core of $f(X)$. The polynomial $f_{\mathrm{sp}}(X)$ generates a $\lambda$-constacyclic code
\[
	\mathcal C_{\mathrm{sp}}
	=(f_{\mathrm{sp}})
	\subseteq \mathcal{R}_{M,\lambda}
\]
of length $M = \frac{N}{s}$ over $\mathbb{F}_{q}$, called the support-primitive core of \(\mathcal C\). There is a Hamming distance-preserving $\mathbb{F}_{q}$-isomorphism
\[
	\mathcal C\cong
	\mathcal C_{\mathrm{sp}}^{\, s}.
\]
This section is devoted to deriving some coding-theoretic consequences of the above conclusions.

\subsection{Hamming distance and weight distribution}
For a linear code \(D\), let \(A_i(D)\) denote the number of
codewords of Hamming weight \(i\), and let
\[
	W_D(x,y)
	=
	\sum_{i}A_i(D)x^{\ell-i}y^i
\]
be its Hamming weight enumerator, where \(\ell\) is the length of
\(D\). We also write
\[
	\rho(D)
	=
	\max_{v\in\mathbb F_q^\ell} d_H(v,D)
\]
for the covering radius of \(D\).

\begin{theorem}
	Let \(d=d_H(\mathcal C_{\mathrm{sp}})\). Then
	\[
		A_d(\mathcal C)
		=
		sA_d(\mathcal C_{\mathrm{sp}}),
	\]
	and
	\[
		W_{\mathcal C}(x,y)
		=
		W_{\mathcal C_{\mathrm{sp}}}(x,y)^s.
	\]
	Moreover,
	\[
		\rho(\mathcal C)
		=
		s\rho(\mathcal C_{\mathrm{sp}}).
	\]
\end{theorem}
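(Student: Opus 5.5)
All three identities will be read off from the weight-preserving isomorphism of Theorem~\ref{thm:support-primitive-decomposition}. The key fact is that $\Phi_s$ is more than a weight-preserving bijection $\mathcal C_{\mathrm{sp}}^s\to\mathcal C$. It is induced by a coordinate bijection: the coordinate $i$ of the $j$-th copy of $\mathcal R_{M,\lambda}$ goes to the coordinate $si+j$ of $\mathcal R_{N,\lambda}$. Hence, for every $(\alpha_0,\dots,\alpha_{s-1})\in\mathcal R_{M,\lambda}^s$,
\[
\mathrm{wt}_{\mathrm H}\,\Phi_s(\alpha_0,\dots,\alpha_{s-1})=\sum_{j=0}^{s-1}\mathrm{wt}_{\mathrm H}(\alpha_j),
\]
and $\Phi_s$ maps $\mathcal C_{\mathrm{sp}}^s$ onto $\mathcal C$. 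I will use this additivity of weight over the $s$ blocks throughout.

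\textbf{Weight enumerator.} Since $\Phi_s$ is a bijection, I can compute $W_{\mathcal C}$ by summing over $\mathcal C_{\mathrm{sp}}^s$:
\[
W_{\mathcal C}(x,y)=\sum_{(c_0,\dots,c_{s-1})\in\mathcal C_{\mathrm{sp}}^s} x^{N-\sum_j \mathrm{wt}(c_j)}\,y^{\sum_j\mathrm{wt}(c_j)} .
\]
Because $N=sM$, the summand factors as $\prod_j x^{M-\mathrm{wt}(c_j)}y^{\mathrm{wt}(c_j)}$. The sum therefore splits into a product of $s$ identical sums, which gives $W_{\mathcal C_{\mathrm{sp}}}(x,y)^s$.

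\textbf{Count of minimum-weight words.} Let $d=d_H(\mathcal C_{\mathrm{sp}})$. A nonzero tuple has total weight $d$ exactly when one block has weight $d$ and all other blocks are zero. Each nonzero block already has weight at least $d$, so two nonzero blocks would give weight at least $2d>d$. Counting the choice of block and the codeword in it gives $A_d(\mathcal C)=sA_d(\mathcal C_{\mathrm{sp}})$. Alternatively, one can extract the coefficient of $x^{N-d}y^d$ from $W_{\mathcal C_{\mathrm{sp}}}^s$. This argument also recovers $d_H(\mathcal C)=d$.

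\textbf{Covering radius.} Extend $\Phi_s$ to all of $\mathcal R_{M,\lambda}^s\to\mathcal R_{N,\lambda}$. This extension is a weight-preserving linear bijection, so it preserves Hamming distances. For $u=\Phi_s(u_0,\dots,u_{s-1})$ I get
\[
d_H(u,\mathcal C)=\min_{c_j\in\mathcal C_{\mathrm{sp}}}\sum_j d_H(u_j,c_j)=\sum_j d_H(u_j,\mathcal C_{\mathrm{sp}}),
\]
where the minimum separates because the blocks are independent. Taking the maximum over $u$ then also separates blockwise. Choosing each $u_j$ to be a deep hole of $\mathcal C_{\mathrm{sp}}$ attains $s\rho(\mathcal C_{\mathrm{sp}})$, and the blockwise bound shows this value cannot be exceeded.

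\textbf{Main obstacle.} The only delicate point is to justify that $\Phi_s$ preserves distances on the ambient space, and not merely weights on the code; the covering-radius argument needs this. It follows from linearity together with the weight preservation on all of $\mathcal R_{M,\lambda}^s$, which Theorem~\ref{thm:support-primitive-decomposition} already provides. Everything else is bookkeeping.
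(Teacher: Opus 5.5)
Your proposal is correct and follows essentially the same route as the paper: additivity of weight under the interleaver $\Phi_s$, the observation that a weight-$d$ word has exactly one nonzero block, and the blockwise splitting $d_H(v,\mathcal C)=\sum_{j} d_H(v_j,\mathcal C_{\mathrm{sp}})$ followed by maximizing over $v$. Factoring the enumerator sum directly, and noting explicitly that $\Phi_s$ preserves distances on the whole ambient space (which the covering-radius step needs and the paper leaves implicit), are minor refinements rather than a different argument.
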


\begin{proof}
	Under the interleaver isomorphism
	\[
		\mathcal C
		\cong
		\mathcal C_{\mathrm{sp}}^{\oplus s},
	\]
	the Hamming weight is additive:
	\[
		\operatorname{wt}(c_0,\ldots,c_{s-1})
		=
		\sum_{j=0}^{s-1}\operatorname{wt}(c_j).
	\]
	Since every nonzero codeword of \(\mathcal C_{\mathrm{sp}}\) has
	weight at least \(d\), a codeword of
	\(\mathcal C_{\mathrm{sp}}^{\oplus s}\) has weight \(d\) if and
	only if exactly one component has weight \(d\) and all other
	components are zero. Hence
	\[
		A_d(\mathcal C)
		=
		sA_d(\mathcal C_{\mathrm{sp}}).
	\]

	More generally, if
	\[
		r_0+\cdots+r_{s-1}=r,
	\]
	then the number of codewords of weight \(r\) in
	\(\mathcal C_{\mathrm{sp}}^{\oplus s}\) is
	\[
		\sum_{r_0+\cdots+r_{s-1}=r}
		\prod_{j=0}^{s-1}A_{r_j}(\mathcal C_{\mathrm{sp}}),
	\]
	which is precisely the coefficient of \(x^{sM-r}y^r\) in
	\[
		W_{\mathcal C_{\mathrm{sp}}}(x,y)^s.
	\]
	Therefore
	\[
		W_{\mathcal C}(x,y)
		=
		W_{\mathcal C_{\mathrm{sp}}}(x,y)^s.
	\]

	Finally, for
	\[
		v=(v_0,\ldots,v_{s-1}),
	\]
	we have
	\[
		d_H(v,\mathcal C)
		=
		\sum_{j=0}^{s-1}
		d_H(v_j,\mathcal C_{\mathrm{sp}}).
	\]
	Taking the maximum over all \(v\) gives
	\[
		\rho(\mathcal C)
		=
		s\rho(\mathcal C_{\mathrm{sp}}).
	\]
\end{proof}

\begin{corollary}
	Suppose that \(\mathcal C_{\mathrm{sp}}\) has parameters
	\([M,K,d]\). Then \(\mathcal C\) has parameters
	\[
		[N,sK,d],
		\qquad N=sM.
	\]
	In particular,
	\[
		d_H(\mathcal C)=d_H(\mathcal C_{\mathrm{sp}}).
	\]
\end{corollary}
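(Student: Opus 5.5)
The corollary follows from the Hamming-weight-preserving isomorphism $\Phi_{s}:\mathcal C_{\mathrm{sp}}^{\,s}\to\mathcal C$ of Theorem~\ref{thm:support-primitive-decomposition}, together with the weight-enumerator identity just proved. The plan is to read off the three parameters of $\mathcal C$ one at a time: length, dimension, and minimum distance.

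\emph{Length.} By Lemma~\ref{lem:1}, $s\mid N$ and $\mathcal C_{\mathrm{sp}}$ is a $\lambda$-constacyclic code of length $M=N/s$. Hence $N=sM$, and $\mathcal C$ has length $N$ by definition.

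\emph{Dimension.} Since $\Phi_s$ is an $\mathbb F_q$-linear isomorphism onto $\mathcal C$,
\[
\dim_{\mathbb F_q}\mathcal C=\dim_{\mathbb F_q}\mathcal C_{\mathrm{sp}}^{\,s}=s\dim_{\mathbb F_q}\mathcal C_{\mathrm{sp}}=sK.
\]
As a cross-check, $\deg f=s\deg f_{\mathrm{sp}}$ by Corollary~\ref{coro:1}, so
\[
\dim\mathcal C=N-\deg f=s\bigl(M-\deg f_{\mathrm{sp}}\bigr)=sK,
\]
which agrees.

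\emph{Minimum distance.} Because $\Phi_s$ preserves Hamming weight, $d_H(\mathcal C)$ equals the minimum weight of a nonzero element of $\mathcal C_{\mathrm{sp}}^{\,s}$. The weight of $(c_0,\dots,c_{s-1})$ is $\sum_j\mathrm{wt}_H(c_j)$. If the tuple is nonzero, some $c_j\neq0$, and then $\mathrm{wt}_H(c_j)\ge d$, so the total weight is at least $d$. Conversely, taking $c_0$ to be a minimum-weight codeword of $\mathcal C_{\mathrm{sp}}$ and all other $c_j=0$ gives a tuple of weight exactly $d$. Equivalently, $A_d(\mathcal C)=sA_d(\mathcal C_{\mathrm{sp}})>0$ while $A_i(\mathcal C)=0$ for $0<i<d$, which can be read off $W_{\mathcal C}=W_{\mathcal C_{\mathrm{sp}}}^{\,s}$. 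Either way, $d_H(\mathcal C)=d=d_H(\mathcal C_{\mathrm{sp}})$.

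There is no real obstacle here. The only step needing care is making sure the isomorphism used is the weight-preserving one, so that minimum weights transfer rather than only dimensions.
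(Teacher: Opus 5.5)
Your proposal is correct and follows essentially the paper's route. The corollary is read off from the weight-preserving interleaver isomorphism $\Phi_s:\mathcal C_{\mathrm{sp}}^{\,s}\to\mathcal C$, equivalently from $A_d(\mathcal C)=sA_d(\mathcal C_{\mathrm{sp}})$ and $W_{\mathcal C}=W_{\mathcal C_{\mathrm{sp}}}^{\,s}$, with the dimension also matching the degree count $\deg f=s\deg f_{\mathrm{sp}}$ used in the analogous corollary of Section~3. Your explicit lower-bound step (some component $c_j\neq 0$ forces total weight $\ge d$) is slightly more careful than the paper's own version of this argument, which only exhibits the tuple $(\alpha,0,\ldots,0)$.
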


The preceding result allows upper and lower bounds for the Hamming
distance to be transferred directly between a constacyclic code and
its support-primitive core.

\begin{proposition}
	Let \(U_q(M,K)\) be any upper bound for the minimum Hamming distance
	of a linear \([M,K]\) code over \(\mathbb F_q\). Then
	\[
		d_H(\mathcal C)
		=
		d_H(\mathcal C_{\mathrm{sp}})
		\leq U_q(M,K).
	\]
	Likewise, any lower bound for \(d_H(\mathcal C_{\mathrm{sp}})\)
	is also a lower bound for \(d_H(\mathcal C)\).
\end{proposition}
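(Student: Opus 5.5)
The statement is an immediate consequence of the preceding Corollary, so the proof is a short chain of two transfers. First, by that Corollary (equivalently, by the Hamming-weight-preserving isomorphism $\Phi_s:\mathcal C_{\mathrm{sp}}^{s}\to\mathcal C$ of Theorem \ref{thm:support-primitive-decomposition}), we have $d_H(\mathcal C)=d_H(\mathcal C_{\mathrm{sp}})$. Here $\mathcal C_{\mathrm{sp}}=(f_{\mathrm{sp}})\subseteq\mathcal R_{M,\lambda}$ has length $M=N/s$. Its dimension is $K=M-\deg f_{\mathrm{sp}}$, which by Corollary \ref{coro:1} equals $(N-\deg f)/s$.

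For the upper bound, I use only the fact that $\mathcal C_{\mathrm{sp}}$ is a linear $[M,K]$ code over $\mathbb F_q$. By hypothesis, $U_q(M,K)$ bounds the minimum distance of every such code, so $d_H(\mathcal C_{\mathrm{sp}})\le U_q(M,K)$. Combining this with the equality above gives the displayed chain.

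For the lower bound, suppose $L\le d_H(\mathcal C_{\mathrm{sp}})$. Then $L\le d_H(\mathcal C)$ by the same equality. I will also state the underlying reason, to make the transfer transparent. Any nonzero codeword $c\in\mathcal C$ is $\Phi_s(\alpha_0,\dots,\alpha_{s-1})$ with some $\alpha_j\neq0$ in $\mathcal C_{\mathrm{sp}}$. Since $\Phi_s$ preserves weight, $\mathrm{wt}_H(c)=\sum_j\mathrm{wt}_H(\alpha_j)\ge\mathrm{wt}_H(\alpha_{j})\ge L$.

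There is no real obstacle here. The only point needing care is to make sure the parameters $[M,K]$ fed into $U_q$ are those of the core and not of $\mathcal C$. Applying the bound directly to $\mathcal C$ would give the weaker $U_q(N,sK)$. I would remark that this is exactly why the proposition is useful, as in the subsequent Singleton-type application.
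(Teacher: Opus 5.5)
Your proposal is correct and matches the paper, which gives no separate proof and treats the statement as immediate from the preceding corollary $d_H(\mathcal C)=d_H(\mathcal C_{\mathrm{sp}})$ together with the fact that $\mathcal C_{\mathrm{sp}}$ is a linear $[M,K]$ code; your direct weight argument via $\Phi_s$ is exactly the reasoning behind that corollary. One qualifier on your closing remark: $U_q(N,sK)$ is indeed weaker for Singleton-type bounds, since $N-sK+1=s(M-K)+1$, but this is not automatic for an arbitrary valid bound function $U_q$.
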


In particular, suppose that \(\mathcal C\) is a simple-root
constacyclic code. Let \(d\) be the common difference of the
coarsest minimum equal-difference decomposition of the defining set.
Then
\[
	s=\operatorname{sp}(f)=\frac{Nr}{d},
\]
where \(r=\operatorname{ord}(\lambda)\). Hence the arithmetic
Singleton bound introduced in~\cite{Zhu3} can be written as
\[
	\frac{\deg f\,d}{Nr}+1
	=
	\frac{\deg f}{s}+1.
\]
Since
\[
	\deg f=s\deg f_{\mathrm{sp}},
	\qquad
	M=\frac Ns,
\]
and
\[
	K=M-\deg f_{\mathrm{sp}},
\]
we obtain
\[
	\frac{\deg f\,d}{Nr}+1
	=
	\deg f_{\mathrm{sp}}+1
	=
	M-K+1.
\]
Thus the arithmetic Singleton bound of \(\mathcal C\) is exactly the
classical Singleton bound applied to its support-primitive core.

\begin{corollary}
	For a simple-root constacyclic code \(\mathcal C=(f)\),
	the arithmetic Singleton bound is attained if and only if
	\(\mathcal C_{\mathrm{sp}}\) is an MDS code.
\end{corollary}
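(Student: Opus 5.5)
The plan is to combine the identity just derived with the preceding corollary on parameters. First, the arithmetic Singleton bound of $\mathcal C$ is
\[
	b_{\mathrm{AS}}(\mathcal C)=\frac{\deg f\,d}{Nr}+1=\deg f_{\mathrm{sp}}+1=M-K+1,
\]
where $[M,K,d_H(\mathcal C_{\mathrm{sp}})]$ are the parameters of $\mathcal C_{\mathrm{sp}}$. This is the classical Singleton bound for $\mathcal C_{\mathrm{sp}}$. I would recheck the ingredients briefly. From the simple-root theorem of Section~4, $s=\operatorname{sp}(f)=\frac{Nr}{d}$ (here $N=n$). Corollary~\ref{coro:1} gives $\deg f=s\deg f_{\mathrm{sp}}$. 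Since $\mathcal C_{\mathrm{sp}}=(f_{\mathrm{sp}})\subseteq\mathcal R_{M,\lambda}$ with $f_{\mathrm{sp}}\mid X^M-\lambda$ by Lemma~\ref{lem:1}, we have $K=M-\deg f_{\mathrm{sp}}$.

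Second, the corollary above gives $d_H(\mathcal C)=d_H(\mathcal C_{\mathrm{sp}})$. This follows from the weight-preserving isomorphism $\Phi_s:\mathcal C_{\mathrm{sp}}^{s}\to\mathcal C$ of Theorem~\ref{thm:support-primitive-decomposition}: a nonzero codeword in a single slot realizes the minimum.

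Hence $d_H(\mathcal C)=b_{\mathrm{AS}}(\mathcal C)$ if and only if $d_H(\mathcal C_{\mathrm{sp}})=M-K+1$. By definition, the latter says exactly that $\mathcal C_{\mathrm{sp}}$ is MDS. Both directions follow at once, since the argument is a chain of equalities.

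The only delicate point is the degenerate case where $f_{\mathrm{sp}}$ is constant or $\mathcal C_{\mathrm{sp}}$ is zero. The paper assumes $f$ is nonconstant, so $\deg f_{\mathrm{sp}}\ge1$. If $f=X^N-\lambda$, both codes are zero, and I would either exclude this case or adopt the usual convention that both sides hold trivially. Otherwise no real obstacle is expected.
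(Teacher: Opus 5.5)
Your proposal is correct and follows the paper's own argument. The paper obtains $\frac{\deg f\,d}{Nr}+1=\deg f_{\mathrm{sp}}+1=M-K+1$ from $s=Nr/d$, $\deg f=s\deg f_{\mathrm{sp}}$ and $K=M-\deg f_{\mathrm{sp}}$, then combines this with $d_H(\mathcal C)=d_H(\mathcal C_{\mathrm{sp}})$, exactly as you do; your remark on the degenerate zero-code case $f=X^N-\lambda$ is a harmless clarification that the paper leaves implicit.
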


\subsection{Duality, LCD, self-orthogonality, and the hull}

Let \(\mathcal C^\perp\) denote the Euclidean dual of \(\mathcal C\).
Recall that the Euclidean dual of a \(\lambda\)-constacyclic code is
a \(\lambda^{-1}\)-constacyclic code.

\begin{theorem}
	Let \(\mathcal C_{\mathrm{sp}}\) be the support-primitive core of
	\(\mathcal C\). Then
	\[
		\mathcal C^\perp
		\cong
		(\mathcal C_{\mathrm{sp}}^\perp)^{\oplus s}.
	\]
	Moreover, \(\mathcal C_{\mathrm{sp}}^\perp\) is the
	support-primitive core of \(\mathcal C^\perp\).
\end{theorem}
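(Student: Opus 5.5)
Our plan is to check the claim directly on coordinates, using the interleaver $\Phi_s$ of Theorem \ref{thm:support-primitive-decomposition}, and then identify the resulting core through its generator polynomial.

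\textbf{Step 1: the interleaver respects the inner product.} Work with vectors $c=(c_0,\dots,c_{N-1})$. Under $\Phi_s$, position $si+j$ of $\mathbb F_q^N$ corresponds to position $i$ of the $j$-th block in $(\mathbb F_q^M)^s$. So $\Phi_s$ is simply a coordinate permutation. Coordinate permutations preserve the standard bilinear form, which gives
\[
\langle \Phi_s(\alpha),\Phi_s(\beta)\rangle=\sum_{j=0}^{s-1}\langle\alpha_j,\beta_j\rangle
\]
for all $\alpha,\beta\in(\mathbb F_q^M)^s$.

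\textbf{Step 2: the dual of $\mathcal C$.} Since $\Phi_s$ is an isometry for the inner product, $\mathcal C^\perp=\Phi_s\big((\mathcal C_{\mathrm{sp}}^{s})^\perp\big)$. For a direct sum of orthogonal blocks, $(D^{s})^\perp=(D^\perp)^{s}$, and we apply this with $D=\mathcal C_{\mathrm{sp}}$. Hence $\mathcal C^\perp=\Phi_s\big((\mathcal C_{\mathrm{sp}}^\perp)^s\big)$, which proves the first assertion.

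\textbf{Step 3: the support-primitive core of $\mathcal C^\perp$.} Let $X^M-\lambda=f_{\mathrm{sp}}(X)g_0(X)$, as in the proof of Lemma \ref{lem:1}. Then $X^N-\lambda=f(X)\,g_0(X^s)$, so the check polynomial of $\mathcal C$ is $h(X)=g_0(X^s)$.

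The generator of $\mathcal C^\perp$ is the normalized reciprocal $h^\ast(X)=h(0)^{-1}X^{\deg h}h(X^{-1})$. This equals $(g_0^\ast)(X^s)$, where $g_0^\ast$ is the normalized reciprocal of $g_0$ and generates $\mathcal C_{\mathrm{sp}}^\perp\subseteq\mathcal R_{M,\lambda^{-1}}$.

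It remains to show $\mathrm{sp}(h^\ast)=s$ exactly. Taking reciprocals preserves the gcd of the support, because the support maps $j\mapsto \deg-j$ and both $0$ and $\deg$ lie in the support. So $\mathrm{sp}(h^\ast)=\mathrm{sp}(g_0(X^s))=s\cdot\mathrm{sp}(g_0)$. We therefore need $g_0$ to be support-primitive.

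\textbf{Main obstacle.} The difficulty is in this last point: $g_0$ need not be support-primitive. For example, $f=X^2-1$ and $X^4-1=(X^2-1)(X^2+1)$ give $s=2$, $g_0=X+1$, which is fine, but in general the cofactor can have a larger period.

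Proposition \ref{prop:1} offers a cleaner route. The family of projections $\pi_j^{(s')}$ preserves $\mathcal C$ if and only if it preserves $\mathcal C^\perp$. This holds because the $V_j^{(s')}$ are mutually orthogonal, so $\pi_j$ is self-adjoint. Consequently the maximal $s'$ in Proposition \ref{prop:1} is the same for $\mathcal C$ and $\mathcal C^\perp$, and so $\mathrm{sp}(h^\ast)=s$. We will need a version of Proposition \ref{prop:1} valid for $\lambda^{-1}$.

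With that equality, the core of $\mathcal C^\perp$ is $(g_0^\ast)$ of length $M$, which is $\mathcal C_{\mathrm{sp}}^\perp$. This is the step we will verify most carefully: the self-adjointness argument and the maximality in Proposition \ref{prop:1}, rather than computing $\mathrm{sp}(g_0)$ directly.
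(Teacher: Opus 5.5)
Your proof is correct and is essentially the paper's argument. The first assertion uses the same coordinate-permutation property of $\Phi_s$. The second rests, as in the paper, on the maximality in Proposition~\ref{prop:1} transferred to $\mathcal C^\perp$ by duality: your self-adjointness of the projections $\pi_j^{(s')}$ is exactly what makes the paper's ``dualizing'' step precise. Your computation $h^\ast=g_0^\ast(X^s)$ also makes explicit the identification of the core of $\mathcal C^\perp$ with $\mathcal C_{\mathrm{sp}}^\perp$, which the paper leaves implicit.

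The ``main obstacle'' you flag is illusory, and the remark that the cofactor ``can have a larger period'' is false. Your own argument shows $\mathrm{sp}(g_0)=1$ whenever $g_0$ is nonconstant; so does the proof of Lemma~\ref{lem:1} applied to both $f$ and its cofactor. No new version of Proposition~\ref{prop:1} is needed, since it is already stated for arbitrary $\lambda$, hence covers $\lambda^{-1}$. The one real exception is $f=X^N-\lambda$, where $g_0=1$: there, with the convention $\mathrm{sp}(1)=1$, the second assertion fails for the paper's statement as well.
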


\begin{proof}
	The interleaver map
	\[
		\Phi_s:
		\mathcal C_{\mathrm{sp}}^{\oplus s}
		\longrightarrow
		\mathcal C
	\]
	is a coordinate permutation. Hence it preserves the Euclidean inner
	product. More explicitly, for
	\[
		\alpha=(\alpha_0,\ldots,\alpha_{s-1}),
		\qquad
		\beta=(\beta_0,\ldots,\beta_{s-1}),
	\]
	we have
	\[
		\bigl\langle
		\Phi_s(\alpha),\Phi_s(\beta)
		\bigr\rangle
		=
		\sum_{j=0}^{s-1}
		\langle \alpha_j,\beta_j\rangle.
	\]
	Therefore
	\[
		\Phi_s(\alpha)\in\mathcal C^\perp
	\]
	if and only if
	\[
		\sum_{j=0}^{s-1}
		\langle\alpha_j,\beta_j\rangle=0
	\]
	for every
	\((\beta_0,\ldots,\beta_{s-1})
	\in\mathcal C_{\mathrm{sp}}^{\oplus s}\), which is equivalent to
	\[
		\alpha_j\in\mathcal C_{\mathrm{sp}}^\perp
		\qquad
		(0\leq j\leq s-1).
	\]
	Consequently,
	\[
		\mathcal C^\perp
		\cong
		(\mathcal C_{\mathrm{sp}}^\perp)^{\oplus s}.
	\]

	It remains to show that
	\(\mathcal C_{\mathrm{sp}}^\perp\) is support-primitive.
	Suppose otherwise that \(\mathcal C^\perp\) admitted a decomposition
	with a support parameter strictly larger than \(s\). By Proposition~2.8,
	this would give a nontrivial direct-sum decomposition of
	\(\mathcal C^\perp\) into residue classes with a larger support
	period. Dualizing this decomposition would induce the corresponding
	decomposition of \(\mathcal C\), contradicting the maximality of
	\(s=\operatorname{sp}(f)\). Hence \(s\) is also the maximal support
	parameter for \(\mathcal C^\perp\), and
	\(\mathcal C_{\mathrm{sp}}^\perp\) is its support-primitive core.
\end{proof}

Recall that the hull of a linear code \(D\) is
\[
	\operatorname{Hull}(D)
	=
	D\cap D^\perp.
\]

\begin{corollary}
	The hull decomposes as
	\[
		\operatorname{Hull}(\mathcal C)
		\cong
		\operatorname{Hull}(\mathcal C_{\mathrm{sp}})^{\oplus s}.
	\]
\end{corollary}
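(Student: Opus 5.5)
The plan is to use the interleaver isomorphism $\Phi_s:\mathcal R_{M,\lambda}^{s}\to\mathcal R_{N,\lambda}$ of Theorem \ref{thm:support-primitive-decomposition} directly, together with the duality statement just proved. First, $\Phi_s$ is a coordinate permutation: the coefficient $a_{ij}$ of $X^i$ in $\alpha_j$ is sent to position $si+j$. Hence it preserves the Euclidean inner product in the componentwise form
\[
	\langle \Phi_s(\alpha),\Phi_s(\beta)\rangle=\sum_{j=0}^{s-1}\langle\alpha_j,\beta_j\rangle .
\]
From the proof of the preceding theorem, the restriction of $\Phi_s$ maps $\mathcal C_{\mathrm{sp}}^{\oplus s}$ onto $\mathcal C$. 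The same restriction maps $(\mathcal C_{\mathrm{sp}}^\perp)^{\oplus s}$ onto $\mathcal C^\perp$. One point needs care here: $\mathcal C^\perp$ lives in the $\lambda^{-1}$-constacyclic ring. I will therefore regard $\Phi_s$ simply as the permutation map $\mathbb F_q^{M}{}^{\,s}\to\mathbb F_q^N$ on coordinate vectors, so that no ring structure is needed.

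Next I will intersect. Since $\Phi_s$ is a bijection of the ambient spaces, it carries intersections to intersections:
\[
	\Phi_s\bigl(\mathcal C_{\mathrm{sp}}^{\oplus s}\cap(\mathcal C_{\mathrm{sp}}^\perp)^{\oplus s}\bigr)=\mathcal C\cap\mathcal C^\perp .
\]
A direct sum of subspaces taken in the same coordinate blocks satisfies
\[
	\bigoplus_j A_j\cap\bigoplus_j B_j=\bigoplus_j(A_j\cap B_j),
\]
where $A_j,B_j$ sit in the $j$-th block. This holds because membership in either direct sum is checked blockwise. Applying this with $A_j=\mathcal C_{\mathrm{sp}}$ and $B_j=\mathcal C_{\mathrm{sp}}^\perp$ gives
\[
	\operatorname{Hull}(\mathcal C)=\Phi_s\bigl(\operatorname{Hull}(\mathcal C_{\mathrm{sp}})^{\oplus s}\bigr),
\]
which is the claimed isomorphism. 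It is moreover Hamming-weight-preserving, as the restriction of $\Phi_s$.

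The only step that deserves attention is making sure that the identification of $\mathcal C^\perp$ with $(\mathcal C_{\mathrm{sp}}^\perp)^{\oplus s}$ uses the same map $\Phi_s$ as the identification of $\mathcal C$. Without this, the intersection cannot be computed blockwise. To settle it, I will rederive it from the inner-product formula rather than cite the abstract isomorphism. A vector $\Phi_s(\alpha)$ is orthogonal to all of $\Phi_s(\mathcal C_{\mathrm{sp}}^{\oplus s})$ if and only if each $\alpha_j\perp\mathcal C_{\mathrm{sp}}$. This follows by testing against the vectors $\beta$ that have a single nonzero block. With that in hand, everything else is formal.

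As a byproduct, the dimension count $\dim\operatorname{Hull}(\mathcal C)=s\dim\operatorname{Hull}(\mathcal C_{\mathrm{sp}})$ follows. So do the LCD and self-orthogonality equivalences, since $\operatorname{Hull}(\mathcal C)=0$ if and only if $\operatorname{Hull}(\mathcal C_{\mathrm{sp}})=0$, and $\operatorname{Hull}(\mathcal C)=\mathcal C$ if and only if $\operatorname{Hull}(\mathcal C_{\mathrm{sp}})=\mathcal C_{\mathrm{sp}}$.
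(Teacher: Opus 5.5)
Your proposal is correct and matches the paper's argument: the paper states this corollary without separate proof, as an immediate consequence of the duality theorem, whose proof shows that the same coordinate permutation $\Phi_s$ carries $\mathcal C_{\mathrm{sp}}^{\oplus s}$ onto $\mathcal C$ and $(\mathcal C_{\mathrm{sp}}^\perp)^{\oplus s}$ onto $\mathcal C^\perp$, so the hull is computed blockwise. Your explicit check that a single map realizes both identifications, and your treatment of $\Phi_s$ as a permutation of coordinate vectors independent of $\lambda$, make precise what the paper leaves implicit.
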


\begin{corollary}
	The following equivalences hold:
	\begin{align*}
		\mathcal C\text{ is LCD}
		 & \iff
		\mathcal C_{\mathrm{sp}}\text{ is LCD},             \\
		\mathcal C\text{ is self-orthogonal}
		 & \iff
		\mathcal C_{\mathrm{sp}}\text{ is self-orthogonal}, \\
		\mathcal C^\perp\subseteq\mathcal C
		 & \iff
		\mathcal C_{\mathrm{sp}}^\perp
		\subseteq
		\mathcal C_{\mathrm{sp}},                           \\
		\mathcal C\text{ is self-dual}
		 & \iff
		\mathcal C_{\mathrm{sp}}\text{ is self-dual}.
	\end{align*}
\end{corollary}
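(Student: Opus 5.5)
The plan is to deduce all four equivalences from two facts: the duality theorem just proved, $\mathcal C^\perp\cong(\mathcal C_{\mathrm{sp}}^\perp)^{\oplus s}$, and the hull corollary. Both rest on the interleaver $\Phi_s$, which is a coordinate permutation of $\mathbb F_q^N$. The key point is that $\Phi_s$ is more than an abstract isomorphism. It is one fixed permutation $P$ of coordinates, taking $\mathbb F_q^{M}\oplus\cdots\oplus\mathbb F_q^{M}$ onto $\mathbb F_q^N$. It sends the $s$-fold direct sum $D^{\oplus s}$ of any length-$M$ code $D$ onto a length-$N$ code, and inclusions and intersections of subspaces are preserved.

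First I make precise the identification $P(\mathcal C_{\mathrm{sp}}^{\oplus s})=\mathcal C$ (Theorem~\ref{thm:support-primitive-decomposition}). Since $P$ preserves the Euclidean inner product, the proof of the duality theorem gives
\[
P\bigl((\mathcal C_{\mathrm{sp}}^{\perp})^{\oplus s}\bigr)=\mathcal C^\perp .
\]
I record the elementary facts about direct sums of subspaces of $\mathbb F_q^M$:
\[
A^{\oplus s}\cap B^{\oplus s}=(A\cap B)^{\oplus s},\qquad A^{\oplus s}\subseteq B^{\oplus s}\iff A\subseteq B,\qquad A^{\oplus s}=B^{\oplus s}\iff A=B.
\]
For the second fact, the direction ($\Leftarrow$) is trivial. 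For ($\Rightarrow$), embed $a\in A$ as $(a,0,\dots,0)$ and read off the first component.

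Next I apply $P$, which is bijective, to each condition in turn.
\begin{enumerate}
\item LCD means $\mathcal C\cap\mathcal C^\perp=0$. This equals $P\bigl((\mathcal C_{\mathrm{sp}}\cap\mathcal C_{\mathrm{sp}}^\perp)^{\oplus s}\bigr)$, which vanishes iff $\mathcal C_{\mathrm{sp}}\cap\mathcal C_{\mathrm{sp}}^\perp=0$. Equivalently, use the hull corollary together with $\dim \operatorname{Hull}(\mathcal C)=s\dim\operatorname{Hull}(\mathcal C_{\mathrm{sp}})$.
\item Self-orthogonality $\mathcal C\subseteq\mathcal C^\perp$ becomes $\mathcal C_{\mathrm{sp}}^{\oplus s}\subseteq(\mathcal C_{\mathrm{sp}}^\perp)^{\oplus s}$, which is equivalent to $\mathcal C_{\mathrm{sp}}\subseteq \mathcal C_{\mathrm{sp}}^\perp$.
\item Dual-containment follows in the same way with the inclusion reversed.
\item Self-duality is the conjunction of the last two, or the equality version of the third fact.
\end{enumerate}

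The only subtle point is the hypothesis of each equivalence. The inclusion and equality equivalences must be taken inside a common ambient space with the same coordinate identification, not merely up to abstract isomorphism. This is why I would state explicitly that the same permutation $P$ realizes both $\mathcal C$ and $\mathcal C^\perp$. That also covers the fact that $\mathcal C^\perp$ is $\lambda^{-1}$-constacyclic: $P$ is defined on $\mathbb F_q^N$ independently of $\lambda$, so this causes no difficulty. Beyond this bookkeeping, every step is routine linear algebra.
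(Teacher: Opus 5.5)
Your proposal is correct and is essentially the argument the paper intends. The paper gives no separate proof and treats the corollary as immediate from the duality theorem, whose proof shows that the single coordinate permutation $\Phi_s$ carries $(\mathcal C_{\mathrm{sp}}^\perp)^{\oplus s}$ onto $\mathcal C^\perp$ just as it carries $\mathcal C_{\mathrm{sp}}^{\oplus s}$ onto $\mathcal C$, together with the hull corollary; your insistence that one fixed permutation realizes both codes, so that intersections and inclusions transfer, is exactly the bookkeeping the paper's use of $\cong$ leaves implicit.
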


\section{Arithmetic Singleton bounds for cyclic codes with reducible generator polynomials}

In this section, we apply the support-primitive decomposition to the
study of arithmetic Singleton bounds for cyclic codes with reducible
generator polynomials. We focus on the simple-root cyclic case
$\lambda=1$. By the results of Section~5, if
\[
	\mathcal C=(f)\subseteq
	R_{N,1}=\mathbb F_q[X]/(X^N-1),
\]
then the arithmetic Singleton bound can be written as
\[
	b_{\mathrm{AS}}(\mathcal C)
	=
	\frac{\deg f}{\operatorname{sp}(f)}+1.
	\tag{...}
\]
Thus, for a reducible generator polynomial, the problem of estimating
the arithmetic Singleton bound is naturally related to estimating
the support period of its product of irreducible factors.

We begin with the irreducible case. Let $u(X)$ be an irreducible
factor of $X^n-1$ of order $n$, and put
\[
	\tau=\operatorname{ord}_{n}(q).
\]
Following~\cite{Zhu3}, define
\[
	\omega(n,q)=
	\begin{cases}
		2\operatorname{ord}_{\operatorname{rad}(n)}(q),
		 &
		8\mid n
		\ \text{and}\
		q^{\operatorname{ord}_{\operatorname{rad}(n)}(q)}
		\equiv 3\pmod 4,
		\\[4pt]
		\operatorname{ord}_{\operatorname{rad}(n)}(q),
		 &
		\text{otherwise}.
	\end{cases}
\]
The irreducible case considered in~\cite{Zhu3} gives
\[
	\deg u=\tau,
	\qquad
	\operatorname{sp}(u)
	=
	\frac{\tau}{\omega(n,q)},
\]
and hence
\[
	b_{\mathrm{AS}}(\langle u\rangle)
	=
	\omega(n,q)+1.
\]

The support-primitive decomposition allows this result to be
extended to certain reducible generator polynomials.

\begin{theorem}
	Let
	\[
		f(X)=f_1(X)\cdots f_L(X)
	\]
	be a product of $L$ distinct irreducible polynomials over $\mathbb F_q$,
	where each $f_i(X)$ has order $n$. Put
	\[
		\tau=\operatorname{ord}_{n}(q),
		\qquad
		\omega=\omega(n,q).
	\]
	Then
	\[
		b_{\mathrm{AS}}(\langle f\rangle)
		\leq
		L\omega+1.
	\]
	In particular,
	\[
		b_{\mathrm{AS}}(\langle f\rangle)
		\leq
		2L\operatorname{ord}_{\operatorname{rad}(n)}(q)+1.
	\]
\end{theorem}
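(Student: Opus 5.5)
We use the formula $b_{\mathrm{AS}}(\langle f\rangle)=\frac{\deg f}{\operatorname{sp}(f)}+1$ from Section~5. Since each $f_i$ has degree $\tau=\operatorname{ord}_n(q)$, we get $\deg f=L\tau$. It therefore suffices to show that
\[
\operatorname{sp}(f)\geq \frac{\tau}{\omega},
\]
and for this it is enough to show that $\frac{\tau}{\omega}$ divides $\operatorname{sp}(f)$. Once that holds, $\frac{\deg f}{\operatorname{sp}(f)}\leq \frac{L\tau}{\tau/\omega}=L\omega$, which gives the first bound.

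The key step is the divisibility. We argue on the coefficient side. The irreducible case from \cite{Zhu3} gives $\operatorname{sp}(f_i)=\frac{\tau}{\omega}=:s_0$ for every $i$, because all the $f_i$ share the same order $n$. So $f_i(X)\in\mathbb F_q[X^{s_0}]$ for each $i$. Since $\mathbb F_q[X^{s_0}]$ is a subring of $\mathbb F_q[X]$, the product $f=f_1\cdots f_L$ also lies in $\mathbb F_q[X^{s_0}]$. As $f(0)\neq0$, every index in $\operatorname{Supp}(f)$ is then divisible by $s_0$, and hence $s_0\mid \operatorname{sp}(f)$.

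As a consistency check on the roots side (Lemma~\ref{lem:2} in the simple-root case), the same subgroup of shifts $\frac{n}{s_0}\mathbb Z/n\mathbb Z$ stabilizes each defining set $T_{f_i}$. It therefore stabilizes the disjoint union $T_f=\bigsqcup_i T_{f_i}$, so $s_0$ divides $|\Sigma_f|=\operatorname{sp}(f)$. One could also note that $s_0$ is coprime to $q$, since it divides $n$ by Lemma~\ref{lem:1}.

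For the second inequality, the definition of $\omega(n,q)$ gives $\omega\leq 2\operatorname{ord}_{\operatorname{rad}(n)}(q)$ in both cases. Multiplying by $L$ and adding $1$ yields the bound.

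The main obstacle is making sure the irreducible result of \cite{Zhu3} applies verbatim to each $f_i$: each factor must have order exactly $n$, so that its support period is $\tau/\omega(n,q)$ with the same $n$. If that citation were only available as a statement about $b_{\mathrm{AS}}$, we would recover $\operatorname{sp}(f_i)=\tau/\omega$ through $b_{\mathrm{AS}}(\langle f_i\rangle)=\frac{\tau}{\operatorname{sp}(f_i)}+1=\omega+1$. Note that the bound may be strict, because the product may have a larger stabilizer than the individual factors.
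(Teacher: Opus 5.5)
Your proposal is correct and follows the paper's proof essentially step for step. Like the paper, you take $\operatorname{sp}(f_i)=\tau/\omega=s_0$ from the irreducible case, use closure of $\mathbb F_q[X^{s_0}]$ under products to get $s_0\mid\operatorname{sp}(f)$, and then insert $\deg f=L\tau$ into $b_{\mathrm{AS}}=\deg f/\operatorname{sp}(f)+1$; your roots-side stabilizer check via Lemma~\ref{lem:2} is a harmless extra and is not needed.
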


\begin{proof}
	By the irreducible case discussed above,
	\[
		\deg f_i=\tau,
		\qquad
		\operatorname{sp}(f_i)
		=
		\frac{\tau}{\omega}
	\]
	for every $1\leq i\leq L$.
	Set
	\[
		s_0=\frac{\tau}{\omega}.
	\]
	Since every $f_i(X)$ belongs to $\mathbb F_q[X^{s_0}]$, their product
	also belongs to $\mathbb F_q[X^{s_0}]$. Hence
	\[
		s_0\mid \operatorname{sp}(f).
	\]
	On the other hand,
	\[
		\deg f=L\tau.
	\]
	Therefore
	\[
		b_{\mathrm{AS}}(\langle f\rangle)
		=
		\frac{\deg f}{\operatorname{sp}(f)}+1
		\leq
		\frac{L\tau}{s_0}+1
		=
		L\omega+1.
	\]
	The second inequality follows immediately from the definition of
	$\omega(n,q)$.
\end{proof}

The preceding theorem gives a uniform estimate when all irreducible
factors have the same order. The corresponding bound depends only on
the number of irreducible factors and the arithmetic quantity
$\omega(n,q)$ associated with their common order.

The situation is slightly more general when the irreducible factors
have different orders. The same argument gives an estimate in terms
of the individual support periods.

\begin{proposition}
	Let
	\[
		f(X)=f_1(X)\cdots f_L(X)
	\]
	be a product of distinct irreducible factors of $X^N-1$. For each
	$1\leq i\leq L$, let $n_i$ be the order of $f_i$, and put
	\[
		\tau_i=\operatorname{ord}_{n_i}(q),
		\qquad
		s_i=\frac{\tau_i}{\omega(n_i,q)}.
	\]
	Let
	\[
		d=\gcd(s_1,\ldots,s_L).
	\]
	Then
	\[
		d\mid \operatorname{sp}(f)
	\]
	and consequently
	\[
		b_{\mathrm{AS}}(\langle f\rangle)
		\leq
		\frac{\tau_1+\cdots+\tau_L}{d}+1.
	\]
\end{proposition}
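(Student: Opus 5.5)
We will imitate the proof of the preceding theorem, replacing the common support period $\tau/\omega$ by the gcd of the individual support periods. The first step is to record, for each irreducible factor $f_i$, the two facts supplied by the irreducible case from \cite{Zhu3} quoted above:
\[
	\deg f_i=\tau_i,\qquad \operatorname{sp}(f_i)=s_i=\frac{\tau_i}{\omega(n_i,q)}.
\]
Here we must check that each $f_i$ has nonzero constant term, so that $\operatorname{sp}$ is defined. This holds because $f_i$ divides $X^N-1$, and in the simple-root cyclic setting of this section $f_i\neq X$.

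Second, we show $d\mid\operatorname{sp}(f)$. By Theorem~\ref{thm:1}, $f_i(X)=(f_i)_{\mathrm{sp}}(X^{s_i})$, so $f_i\in\mathbb F_q[X^{s_i}]$. Since $d\mid s_i$, we have $\mathbb F_q[X^{s_i}]\subseteq\mathbb F_q[X^{d}]$, and hence every $f_i$ lies in the subring $\mathbb F_q[X^d]$. Because this is a subring, the product $f=f_1\cdots f_L$ also lies in $\mathbb F_q[X^d]$. Thus every index in $\operatorname{Supp}(f)$ is divisible by $d$, and so $d$ divides their gcd $\operatorname{sp}(f)$. The constant term of $f$ is $\prod f_i(0)\neq0$, so $\operatorname{sp}(f)$ is well defined.

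Third, we derive the bound. Since the factors are distinct irreducibles, $\deg f=\tau_1+\cdots+\tau_L$. Moreover $d\mid\operatorname{sp}(f)$ gives $\operatorname{sp}(f)\geq d$. Substituting into the formula
\[
	b_{\mathrm{AS}}(\langle f\rangle)=\frac{\deg f}{\operatorname{sp}(f)}+1
\]
from the beginning of this section yields
\[
	b_{\mathrm{AS}}(\langle f\rangle)\leq\frac{\tau_1+\cdots+\tau_L}{d}+1.
\]

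None of these steps is hard. The only real obstacle is the applicability of the quoted irreducible-case formula $\operatorname{sp}(f_i)=\tau_i/\omega(n_i,q)$. That formula is stated for an irreducible factor of $X^{n_i}-1$ of order $n_i$. Our $f_i$ has order $n_i$, so $f_i\mid X^{n_i}-1$, and the formula applies with $n=n_i$. We would note that only the divisibility $s_i\mid\operatorname{sp}(f_i)$ is actually needed, not the equality, so the argument is robust to the exact form of $\omega$.
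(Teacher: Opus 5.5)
Your proposal is correct and follows the paper's proof essentially step for step. Each $f_i\in\mathbb F_q[X^{s_i}]\subseteq\mathbb F_q[X^d]$, so the product lies in $\mathbb F_q[X^d]$ and $d\mid\operatorname{sp}(f)$; then $\deg f=\tau_1+\cdots+\tau_L$ is substituted into $b_{\mathrm{AS}}(\langle f\rangle)=\deg f/\operatorname{sp}(f)+1$. Your extra remarks (the nonzero constant terms, and that only $s_i\mid\operatorname{sp}(f_i)$ is actually needed) are harmless refinements that the paper leaves implicit.
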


\begin{proof}
	For each $i$,
	\[
		\operatorname{sp}(f_i)=s_i,
	\]
	and hence
	\[
		f_i(X)\in\mathbb F_q[X^{s_i}]
		\subseteq
		\mathbb F_q[X^d].
	\]
	Therefore
	\[
		f(X)=\prod_{i=1}^L f_i(X)
		\in
		\mathbb F_q[X^d],
	\]
	which implies
	\[
		d\mid\operatorname{sp}(f).
	\]
	Moreover,
	\[
		\deg f
		=
		\sum_{i=1}^L\deg f_i
		=
		\sum_{i=1}^L\tau_i.
	\]
	It follows that
	\[
		b_{\mathrm{AS}}(\langle f\rangle)
		=
		\frac{\deg f}{\operatorname{sp}(f)}+1
		\leq
		\frac{\sum_{i=1}^L\tau_i}{d}+1.
	\]
\end{proof}

The estimate in the proposition need not be sharp. Indeed, the
support period of a product may be strictly larger than the greatest
common divisor of the support periods of its irreducible factors.
Thus, determining the exact arithmetic Singleton bound may require
additional information about the interaction among the factors,
rather than only their individual support periods.

We next give an example showing that, for general reducible
generator polynomials, the arithmetic Singleton bound cannot be
expected to admit a uniform upper bound independent of the length.

\begin{examples}
	Let $r\geq 2$ be an integer coprime to $q$, and for $e\geq 1$ set
	\[
		N_e=r^e,
		\qquad
		f_e(X)=\frac{X^{N_e}-1}{X-1}
		=
		1+X+\cdots+X^{N_e-1}.
	\]
	The cyclic code
	\[
		\mathcal C_e=(f_e)
		\subseteq
		\mathbb F_q[X]/(X^{N_e}-1)
	\]
	is the repetition code with parameters
	\[
		[N_e,1,N_e].
	\]
	Since
	\[
		\operatorname{Supp}(f_e)
		=
		\{0,1,\ldots,N_e-1\},
	\]
	we have
	\[
		\operatorname{sp}(f_e)=1.
	\]
	Consequently,
	\[
		b_{\mathrm{AS}}(\mathcal C_e)
		=
		\deg(f_e)+1
		=
		N_e
		=
		r^e.
	\]
	Hence
	\[
		b_{\mathrm{AS}}(\mathcal C_e)\longrightarrow\infty
		\qquad
		(e\longrightarrow\infty).
	\]
\end{examples}

This example shows that reducibility alone does not impose any
uniform control on the arithmetic Singleton bound. The estimates
above indicate instead that useful bounds can be obtained when the
irreducible factors are subject to arithmetic restrictions, such as
having a common order or having support periods with a sufficiently
large common divisor. In view of the support-primitive
decomposition, the essential quantity is the support period of the
whole generator polynomial, rather than the number or degrees of its
irreducible factors alone.

%\section*{Acknowledgment}

\section*{Data availability}
No datasets were generated or analysed during the current study.

\section*{Declaration of competing interest}
The authors have no relevant financial or non-financial interests to disclose.

\end{document}